\documentclass[onecolumn, a4size, 11pt]{IEEEtran}



\hyphenpenalty=1000

\usepackage{float}
\usepackage{amsmath,amsfonts,amssymb}
\usepackage{graphicx}
\usepackage{cite}
\usepackage{multirow}
\usepackage{algorithm}
\usepackage{algorithmic}
\usepackage{subfigure}
\usepackage{multirow}
\usepackage{url}
\usepackage{color}
\usepackage{array}
\usepackage{caption}
\usepackage{balance}

\graphicspath{{./figsJournalArxiv/}}

\IEEEoverridecommandlockouts

\makeatletter
\newcommand*{\rom}[1]{\expandafter\@slowromancap\romannumeral #1@}
\makeatother

\newtheorem{definition}{Definition}
\newtheorem{lemma}{Lemma}

\newcommand{\xE}{\mathbb{E}}

\newcommand{\Qnet}{\bar Q_{\text{net}}}

\begin{document}
\title{Optimized Training for Net Energy Maximization in Multi-Antenna Wireless Energy Transfer over Frequency-Selective Channel}
\author{Yong~Zeng and Rui~Zhang
\thanks{Y. Zeng is with the Department of Electrical and Computer Engineering, National University of Singapore, Singapore 117583 (e-mail: elezeng@nus.edu.sg).}
\thanks{R. Zhang is with the Department of Electrical and Computer Engineering, National University of Singapore, Singapore 117583 (e-mail: elezhang@nus.edu.sg). He is also with the Institute for Infocomm Research, A*STAR, Singapore 138632.}
}

\maketitle

\begin{abstract}
This paper studies the training design problem for multiple-input single-output (MISO) wireless energy transfer (WET) systems in frequency-selective channels, where the frequency-diversity and energy-beamforming gains can be both reaped  to maximize  the transferred energy by efficiently  learning the channel state information (CSI) at the energy transmitter (ET). By exploiting   channel reciprocity, a new two-phase  channel training scheme is proposed to achieve the diversity and beamforming gains, respectively.  In the first phase, pilot signals are sent from the energy receiver (ER) over a selected subset of the available frequency sub-bands, through which the ET determines a certain number of  ``strongest'' sub-bands with largest antenna sum-power gains and sends their indices to the ER. In the second phase, the selected sub-bands are further trained by the ER, so that the ET obtains a refined  estimate of the corresponding MISO channels to implement energy beamforming for WET. A training design problem is formulated and optimally solved, which takes into account the channel training overhead by maximizing the \emph{net} harvested energy at the ER, defined as the average harvested energy offset by that consumed in the two-phase training. Moreover, asymptotic analysis is obtained  for systems with a large number of antennas or a large number of sub-bands to gain useful insights on the optimal training design. Finally, numerical results are provided to corroborate our analysis and show the effectiveness of the proposed scheme that optimally balances the diversity and beamforming gains achieved in MISO WET systems  with limited-energy training.
\end{abstract}

\begin{IEEEkeywords}Wireless energy transfer (WET), energy beamforming, channel training, RF energy harvesting, frequency diversity.
\end{IEEEkeywords}

\section{Introduction}
 Wireless energy transfer (WET), by which energy is delivered wirelessly from one location  to another without the hassle of connecting cables, has emerged as a promising solution to provide convenient and reliable power supplies for energy-constrained wireless networks, such as radio-frequency identification (RFID) and sensor networks \cite{525,534}. One enabling technique of WET for long-range applications (say up to tens of meters) is via radio-frequency (RF) or microwave propagation, where dedicated energy-bearing signals are sent from the energy transmitter (ET) over a particular frequency band to the energy receiver (ER) for harvesting the received RF energy (see e.g. \cite{525,534} and the references therein).


By exploiting the fact that RF signal is able to convey both information and energy, extensive studies have been recently devoted to unifying both information and energy transfer under the framework so-called simultaneous wireless information and power transfer (SWIPT) \cite{478,514,521}. The advancement of WET techniques has also opened several interesting applications for SWIPT and other wireless-powered communication networks (WPCN) \cite{525,515}, such as cognitive radio networks \cite{510,509} and relay networks \cite{535,511,551,541}. The key practical challenge for implementing  WET is to improve its energy transfer efficiency, which is mainly limited  by the significant power attenuation loss over distance. In line-of-sight (LOS) environment, the efficiency can be enhanced via uni-directional transmission, by designing the antenna radiation patterns to form a single sharp beam towards the ER for energy focusing. Recently, multi-antenna based technique, termed {\it energy beamforming}, has been proposed for WET \cite{478}, where multiple antennas are employed at the ET to more efficiently and flexibly direct wireless energy to one or more ERs via digital beamforming, especially in non-LOS environment with multiple signal paths from the ET to ERs. Similar to the emerging massive multiple-input multiple-output (MIMO) enabled wireless communications (see e.g. \cite{374,497} and the references therein), by equipping a large number of antennas at the ET, enormous  energy beamforming gain can be achieved; hence, the end-to-end energy transfer efficiency can be dramatically enhanced \cite{552}.

  On the other hand, for WET in a wide-band regime over frequency-selective channels, the frequency-diversity gain can also be exploited to further improve the energy transfer efficiency, by transmitting more power over the sub-band with higher channel gain. Ideally, maximum energy transfer efficiency is attained by sending just one sinusoid at the frequency that exhibits the strongest channel frequency response \cite{504}. In practice, it may be necessary to spread the transmission power over multiple strong sub-bands in order to comply with various regulations, such as the power spectral density constraint imposed on the license-free ISM (industrial, scientific and medical) band \cite{549}. WET in single- \cite{504,522,526} and multi-antenna \cite{527} frequency-selective channels have been studied under the more general SWIPT setup, where perfect channel state information (CSI) is assumed at the transmitter.

   In practice, both the energy-beamforming and frequency-diversity gains can be exploited for multi-antenna WET in frequency-selective channels; while they crucially depend on the available CSI at the ET, which needs to be practically obtained  at the cost of additional time and energy consumed. Similar to wireless communication,  a direct approach to obtain CSI is by sending pilot signals from the ET to the ER \cite{495,500},  which estimates the corresponding channel and then sends the estimation back to the ET via a finite-rate feedback link  (see e.g. \cite{484} and the references therein). However, since the training overhead increases with the number of antennas $M$ at the ET, this method is not suitable when $M$ is large \cite{373}. Furthermore, as pointed out in \cite{491}, estimating the channel at the ER requires complex baseband signal processing, which may not be available at the ER due to its practical hardware limitation. A new channel-learning design to cater for the practical RF energy harvesting circuitry at the ER has thus been proposed in \cite{491}. However, the training overhead still increases quickly with $M$, and can be prohibitive as $M$ becomes large. In \cite{528}, by exploiting channel reciprocity between the forward (from the ET to the ER) and reverse (from the ER to the ET) links, we have proposed an alternative channel-learning scheme for WET based on reverse-link training by sending pilot signals from the ER to the ET, which is more efficient since the training overhead is independent of $M$. Moreover, it is revealed in \cite{528} that a non-trivial trade-off exists in WET systems between maximizing the energy beamforming gain by more accurately estimating the channel at the ET and minimizing the training overhead, including the time and energy used for sending the pilot signals from the ER. To optimally resolve the above trade-off, a training design framework has been proposed in \cite{528} based on maximizing the {\it net} harvested energy at the ER, which is the average harvested energy offset by that used for channel training. However, the proposed design in \cite{528} applies only for narrowband flat-fading channels instead of the more complex wide-band frequency-selective fading channels, which motivates the current work.

  In this paper, we study the channel learning  problem for multiple-input single-output (MISO) point-to-point WET systems over frequency-selective fading channels. Compared to its narrowband counterpart studied in \cite{528}, the channel frequency selectivity in MISO wide-band WET system provides  an additional frequency-diversity gain for further  improvement of the  energy transfer efficiency \cite{557}. However, the training design also becomes more challenging, since the channels in both the frequency and spatial domains need to be estimated. In particular, achieving the frequency-diversity gain requires only the knowledge of the {\it total} channel power from all the $M$ transmit antennas at the ET, but for a potentially large number of sub-bands; while the energy beamforming gain relies on a more accurate estimation of the MISO channel (both phase and magnitude for each antenna), but only for a small subset of sub-bands selected with largest channel power for energy transmission. Therefore, an efficient training design for multi-antenna wide-band WET needs to achieve a good balance between exploiting the frequency-diversity gain (by training more independent sub-bands) versus the  energy beamforming gain (by allocating more training energy to the selected sub-bands), and yet without consuming excessive energy at the ER, as will be pursued in this paper. Specifically, the main contributions of this paper are summarized as follows:

  \begin{itemize}
  \item First, under channel reciprocity, we propose a new two-phase channel training scheme for MISO wide-band WET, to exploit the frequency-diversity and energy beamforming gains, respectively. In phase-\rom{1} training, pilot signals are sent from the ER over a selected subset of the available frequency sub-bands. Based on the received total energy across all the antennas at the ET over each of the trained sub-bands, the ET determines a certain number of  sub-bands with largest antenna sum-power gains and sends their ordered indices to the ER. In phase-\rom{2} training, the selected sub-bands are further trained by the ER, so that the ET obtains a refined estimate of their MISO channels to implement energy beamforming over them for WET. The proposed two-phase training protocol is more effective than the single-phase training schemes with phase-\rom{1} or phase-\rom{2} training only, or the brute-force training scheme by estimating the (exact) MISO channels at all the available sub-bands.
  \item Second, based on the order statistic analysis \cite{546},  a closed-form expression is derived for the average harvested energy at the ER with the proposed two-phase training scheme, in terms of the number of sub-bands trained, as well as the pilot signal energy allocated in the two training phases. An optimization problem is thus formulated to maximize the {\it net} harvested energy at the ER, which explicitly takes into account the energy consumption at the ER for sending the pilot signals. The formulated problem is optimally solved based on convex optimization techniques to obtain the optimal training design.
  \item Third, we provide the asymptotic analysis for systems with a large number of antennas (massive-array WET) or a large number of independent sub-bands (massive-band WET) to gain further  insights on the optimal training design. It is shown that while the {\it net} harvested energy at the ER linearly increases with the number of ET antennas asymptotically, it is upper-bounded by a constant value as the number of independent sub-bands goes to infinity. The latter result is a consequence of the faster increase in training energy cost (linearly) than that in the achieved frequency-diversity gain for WET (logarithmically) as more independent sub-bands are trained.
  \end{itemize}

The rest of this paper is organized as follows. Section~\ref{sec:systemModel} introduces the system model. Section~\ref{sec:formulation} discusses the proposed two-phase training scheme for MISO wide-band WET and formulates the training optimization problem. In Section~\ref{sec:optimalTraining}, the formulated problem is optimally solved to obtain the optimal training strategy. In Section~\ref{sec:asympt}, asymptotic analysis is provided for massive-array and massive-band WET systems, respectively. Numerical results are provided in Section~\ref{sec:simulation} to corroborate our study. Finally, we conclude the paper in Section~\ref{sec:conclusion}.

\emph{Notations:} In this paper, scalars are denoted by italic letters. Boldface lower- and upper-case letters denote vectors and matrices, respectively. $\mathbb{C}^{M\times N}$ denotes the space of $M\times N$ complex matrices. $\mathbb{E}[\cdot]$ denotes the expectation operator.  $\mathbf{I}_M$ denotes an $M$-dimensional identity matrix and $\mathbf{0}$ denotes an all-zero matrix of appropriate dimension. For an arbitrary-size matrix $\mathbf{A}$,  its complex conjugate, transpose, and Hermitian transpose are respectively denoted as $\mathbf A^*$, $\mathbf{A}^{T}$ and $\mathbf{A}^{H}$. 
  $\|\mathbf x\|$ denotes the Euclidean norm of vector $\mathbf x$. 
   For a real number $x$, $[x]^+\triangleq \max\{x,0\}$. 
    $a!$ denotes the factorial of $a$ and $\binom{n}{k}$ represents the binomial coefficient. Furthermore, $\mathcal{CN}(\boldsymbol \mu, \mathbf C)$ represents the circularly symmetric complex Gaussian (CSCG) distribution with mean $\boldsymbol \mu$ and covariance matrix $\mathbf C$.  Finally, for two positive-valued functions $f(x)$ and $g(x)$, the notation $f(x)=\Theta(g(x))$ implies that $\exists c_1, c_2>0$ such that $c_1g(x)\leq f(x) \leq c_2 g(x)$ for sufficiently large $x$. 

 \section{System Model}\label{sec:systemModel}
 \subsection{MISO Multi-Band WET}
We consider a MISO point-to-point WET system in frequency-selective channel, where an ET equipped with $M\geq 1$ antennas is employed to deliver wireless energy to a single-antenna ER. We assume that the total available bandwidth is $B$, which is equally divided into $N$ orthogonal sub-bands with the $n$th sub-band centered at  frequency $f_n$ and of bandwidth $B_s=B/N$. We assume that $B_s< B_c$, where $B_c$ denotes the channel coherence bandwidth, so that the channel between the ET and ER experiences frequency-flat fading within each sub-band. Denote $\mathbf h_n\in \mathbb{C}^{M\times 1}$, $n=1,\cdots, N$, as the complex-conjugated baseband equivalent MISO channel from the ET to the ER in the $n$th sub-band. We assume a quasi-static  Rayleigh fading model, where $\mathbf h_n$ remains constant within each block of $T< T_c$ seconds, with $T_c$ denoting the channel coherence time, but can vary from one block to another. 
 Furthermore, the elements in $\mathbf h_n$ are modeled as independent and identically distributed (i.i.d.) zero-mean CSCG random variables with  variance $\beta$, i.e.,
\begin{align}
\mathbf h_n \sim \mathcal{CN}(\mathbf 0, \beta \mathbf I_M), \ n=1,\cdots, N, \label{eq:hn}
\end{align}
where $\beta$ models the large-scale fading  due to shadowing as well as the distance-dependent path loss.

\begin{figure}
\centering
\includegraphics[scale=0.7]{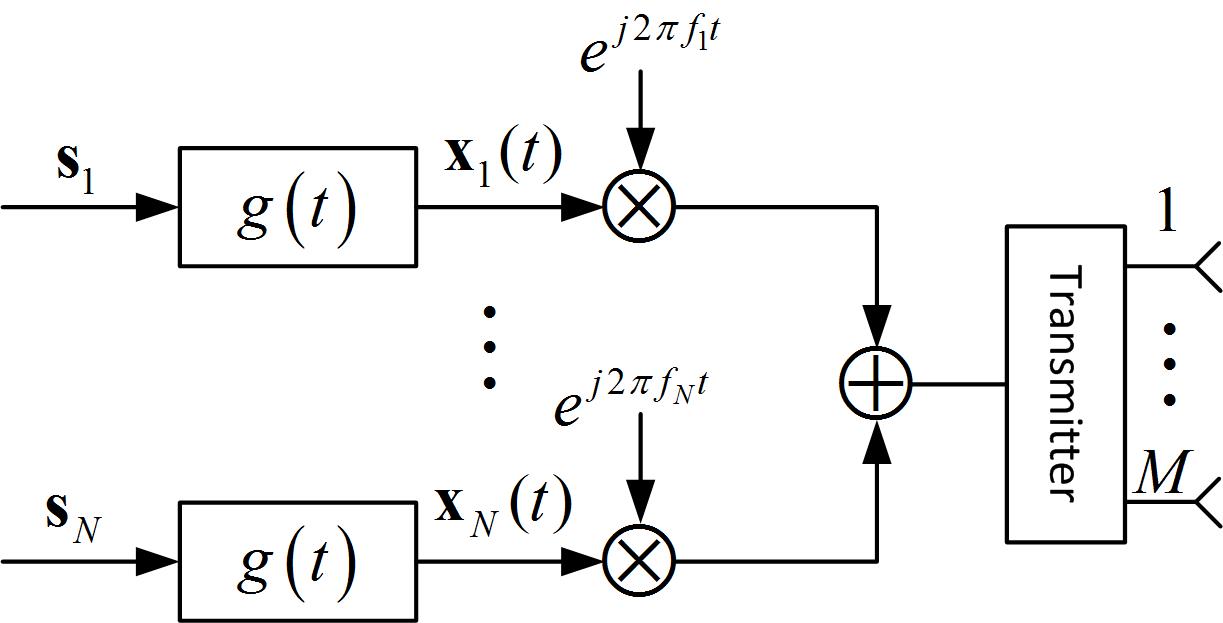}
\caption{Schematics of the  multi-antenna multi-band wireless energy transmitter.}\label{F:transmitter}
\end{figure}

Within each block of $T$ seconds, i.e., $0\leq t\leq T$, the input-output relation for the forward link energy transmission can be expressed as
\begin{align}
y_n(t)= \mathbf h_n^H \mathbf x_n(t) + z_n(t), \ n=1,\cdots, N, \label{eq:yt}
\end{align}
where $y_n(t)$ denotes the received signal at the ER in the $n$th sub-band;
 $\mathbf x_n(t)\in \mathbb{C}^{M\times 1}$ denotes the baseband energy-bearing signals transmitted by the ET in sub-band $n$; and $z_n(t)$ denotes the additive noise at the ER. Different from wireless communication where random signals need to be transmitted to convey information, $\mathbf x_n(t)$ in \eqref{eq:yt} is designated  only for energy transmission and thus can be chosen to be  deterministic. The average power $p_n$ transmitted over sub-band $n$ can then be expressed as
\begin{align}
p_n= \frac{1}{T}\int_0^T \left \| \mathbf x_n(t)\right \|^2 dt, \ n=1,\cdots, N. \label{eq:pn}
\end{align}

At the ER, the incident RF power captured by the antenna is converted to direct current (DC) power for battery replenishment by a device called rectifier \cite{514}. We assume that the RF energy harvesting circuitry is able to achieve efficient energy harvesting over the entire operation frequency band. 
By ignoring the energy harvested from the background noise which is practically small, the total harvested energy at the ER over all $N$ sub-bands during each block can be expressed as \cite{478}
\begin{equation} \label{eq:Q}
 \begin{aligned}
  Q=\eta \sum_{n=1}^N \int_0^T \left|\mathbf h_n^H \mathbf x_n(t)\right|^2dt,
  \end{aligned}
 \end{equation}
 where $0<\eta\leq 1$ denotes the energy conversion efficiency at the ER. 
  Without loss of generality, $\mathbf x_n(t)$ can be expressed as (see Fig.~\ref{F:transmitter} for the transmitter schematics) 
 \begin{align}
 \mathbf x_n(t)=\mathbf s_n  g(t), \ 0\leq t\leq T, \ n=1,\cdots, N, \label{eq:xn}
 \end{align}
 where $\mathbf s_n \in \mathbb{C}^{M\times 1}$ denotes the energy beamforming vector for the $n$th sub-band, and $g(t)$ represents the pulse-shaping waveform with normalized power, i.e., $\frac{1}{T} \int_0^T |g(t)|^2dt=1$. 
   Note that the bandwidth of $g(t)$, which is approximately equal to $1/T$, needs to be no larger than $B_s$. We thus have
 \begin{align}
 \frac{1}{T_c}< \frac{1}{T}<B_s < B_c,  \notag 
 \end{align}
 or $T_cB_c> 1$, i.e., a so-called ``under-spread'' wide-band fading channel is assumed \cite{76}.

\subsection{Optimal WET with Perfect CSI}\label{sec:perfectCSI}
  With  \eqref{eq:pn} and \eqref{eq:xn}, the power $p_n$ transmitted over sub-band $n$ can be expressed as $p_n=\|\mathbf s_n\|^2$, $\forall n$. 
  Furthermore, the harvested energy $Q$ in \eqref{eq:Q} can be written as $Q=\eta T \sum_{n=1}^N |\mathbf h_n^H \mathbf s_n|^2$. In the ideal case with perfect CSI, i.e., $\{\mathbf h_n\}_{n=1}^N$, at the ET, the optimal design of $\{\mathbf s_n\}_{n=1}^N$  that maximizes $Q$ can be obtained by solving the following problem
\begin{equation}\label{P:perfectCSI}
\begin{aligned}
 \quad \max \ & \eta T \sum_{n=1}^N \left|\mathbf h_n^H \mathbf s_n \right|^2 \\
\text{ subject to } & \sum_{n=1}^N \| \mathbf s_n\|^2\leq P_t, \\
& \|\mathbf s_n\|^2 \leq P_s, \ \forall n,
\end{aligned}
\end{equation}
where $P_t$ denotes the maximum transmission power at the ET across all the $N$ sub-bands, and $P_s$ corresponds to the power spectrum density constraint for each sub-band \cite{549}. Without loss of generality, we assume that $P_s\leq P_t \leq N P_s$, since otherwise at least one of the constraint in \eqref{P:perfectCSI} is redundant and hence can be removed. In addition, for the convenience of exposition, we assume that $P_t$ is a multiple of $P_s$, i.e., $P_t/P_s=N_2$ for some integer $1\leq N_2\leq N$.
It can then be obtained that the optimal solution to problem \eqref{P:perfectCSI} is
\begin{align}\label{eq:optxn}
\mathbf s_{[n]}= \begin{cases}
\sqrt {P_s} \frac{\mathbf h_{[n]}}{\|\mathbf h_{[n]}\|}, & n=1,\cdots, N_2, \\
\mathbf 0, & n=N_2+1,\cdots, N,
\end{cases}
\end{align}
where $[\cdot]$ is a permutation such that $\|\mathbf h_{[1]}\|^2 \geq \|\mathbf h_{[2]}\|^2 \cdots \geq \|\mathbf h_{[N]}\|^2$.
The resulting harvested energy is
\begin{align}\label{eq:Qmax}
Q^{\text{ideal}}=\eta T P_s \sum_{n=1}^{N_2} \|\mathbf h_{[n]}\|^2.
\end{align}
It is observed from \eqref{eq:optxn} that for a MISO multi-band WET system with $P_t<NP_s$, or $N_2<N$, the optimal scheme is to transmit over the  $N_2$ strongest sub-bands only, each with the maximum allowable power. As a result, the remaining $N-N_2$ unused sub-bands  can be opportunistically re-used for other applications such as information transmission.  The solution in \eqref{eq:optxn} also shows  that for each of the $N_2$ selected sub-bands, maximum ratio transmission (MRT) should be performed across different transmit antennas to achieve the maximum energy beamforming gain. Moreover, \eqref{eq:Qmax} implies that for multi-antenna  WET systems in frequency-selective channels, both frequency-diversity and energy beamforming gains can be achieved to maximize the energy transfer efficiency in the ideal case with perfect CSI at the ET.

\subsection{Proposed Two-Phase Training Protocol}
In practice, the CSI needs to be obtained  at the ET via channel training and/or feedback from the ER, which are subject to channel estimation and/or feedback errors and incur additional energy and time costs. Consequently, the maximum harvested energy given in \eqref{eq:Qmax} cannot be attained; instead, it only provides a performance upper bound for practical WET systems, for which efficient channel-learning schemes need to be designed. In this paper, by exploiting the channel reciprocity that the forward and reverse link channels are transpose of each other, we propose a two-phase channel training scheme to reap both  the frequency-diversity and energy beamforming gains. As illustrated in Fig.~\ref{F:twoPhaseTraining}, the first  phase corresponds to the first $\tau_1< T$ seconds of each block, where pilot signals are sent by the ER to the ET over $N_1$ out of the $N$ available sub-bands, each with energy $E_1$, where $N_2\leq N_1\leq N$. By comparing the total received energy from all $M$ antennas at the ET over each of the $N_1$ trained sub-bands,  the ET determines the $N_2$ strongest sub-bands with largest antenna sum-power gains, and sends their corresponding orders and indices to the ER. In the second phase of $\tau_2< T-\tau_1$ seconds, additional training signal is sent by the ER over the $N_2$ selected sub-bands, each with possibly different training energy depending on their relative order as determined in phase \rom{1}. The ET then obtains an estimate of the  MISO channel for each of the $N_2$ selected sub-bands, based on which MRT beamforming is performed for energy transmission during the remaining $T-\tau_1-\tau_2$ seconds. The proposed two-phase training scheme is elaborated in more details in the next section.

\begin{figure}
\centering
\includegraphics[scale=0.8]{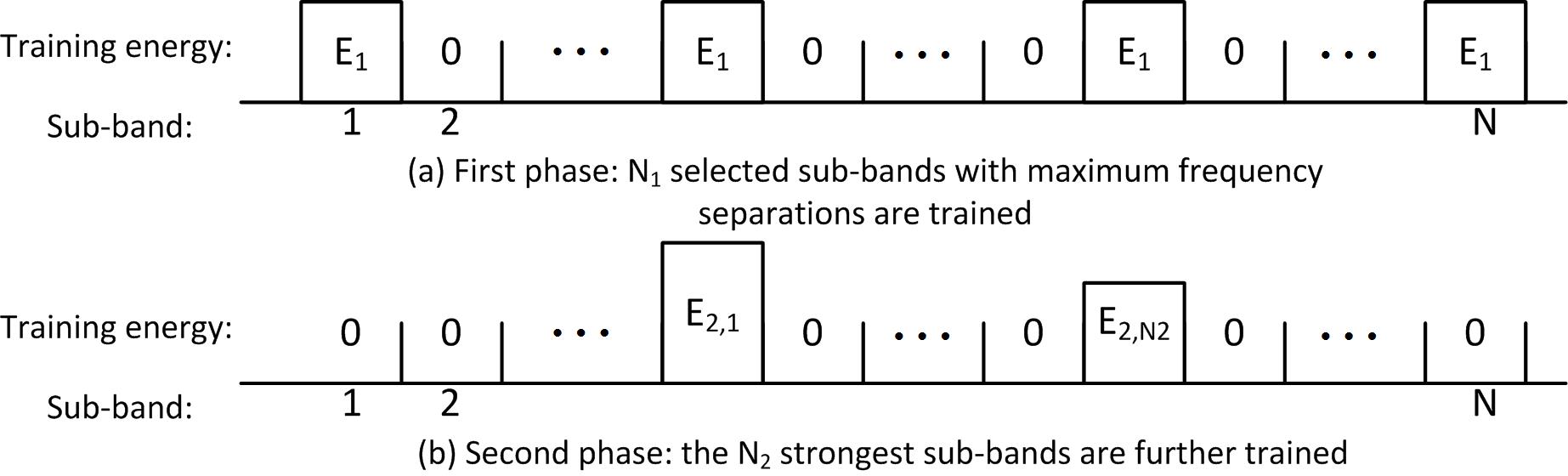}
\caption{Two-phase channel  training for multi-antenna multi-band wireless energy transfer.}\label{F:twoPhaseTraining}
\end{figure}

\section{Problem Formulation}\label{sec:formulation}
\subsection{Two-Phase Training}
\subsubsection{Phase-\rom{1} Training}
 Denote by $\mathcal{N}_1\subset \{1,\cdots, N\}$ the set of the $N_1$ trained sub-bands in phase \rom{1}, with $N_2\leq N_1\leq N$.  To maximize the frequency-diversity gain, the sub-bands with the maximum frequency separations are selected  in $\mathcal{N}_1$ so that their channels are most likely to be independent (see Fig.~\ref{F:twoPhaseTraining}(a)), e.g., if $N_1=2$, we have $\mathcal{N}_1=\{1,N\}$.  The received training signals at the ET can be written as
\begin{align}
\mathbf r_n^{\text{\rom{1}}}(t)=\sqrt{E_1} \mathbf h_n^* \phi_n(t) + \mathbf w_n^{\text{\rom{1}}}(t), \ 0\leq t \leq \tau_1, \ n\in  \mathcal{N}_1,
\end{align}
where 
$E_1$ denotes the training energy used by the ER for each trained sub-band; $\phi_n(t)$ represents the training waveform for sub-band $n$ with normalized energy, i.e., $\int_0^{\tau_1} |\phi_n(t)|^2dt=1$, $\forall n$;  and $\mathbf w_n^{\text{\rom{1}}}(t)\in \mathbb{C}^{M\times 1}$ represents the additive white Gaussian noise (AWGN) at the receiver of the ET with power spectrum density $N_0$.  The total energy consumed at the ER for channel training in this phase  is
\begin{align}
E_{\text{tr}}^{\text{\rom{1}}}=\sum_{n\in \mathcal{N}_1} \int_0^{\tau_1}\left|\sqrt{E_1} \phi_n(t)\right|^2=E_1 N_1.
\end{align}

At the ET, the received training signal is first separated over different sub-bands. After complex-conjugate operation, each $\mathbf r_n^{\text{\rom{1}}*}(t)$ then passes through a matched filter to obtain
\begin{align}
\mathbf y_n^{\text{\rom{1}}}= \int_0^{\tau_1} \mathbf r_n^{\text{\rom{1}}*}(t) \phi_n(t) dt=\sqrt{E_1} \mathbf h_n + \mathbf z_n^{\text{\rom{1}}}, \ n\in \mathcal{N}_1, \label{eq:ynI}
\end{align}
where $\mathbf z_n^{\text{\rom{1}}}\sim \mathcal{CN}(\mathbf 0, N_0 \mathbf I_M)$ denotes the i.i.d. AWGN vector.
Based on \eqref{eq:ynI}, the ET determines the $N_2$ out of the $N_1$ trained sub-bands that have the largest power gains. Specifically, the $N_1$ sub-bands in $\mathcal{N}_1$ are ordered based on the total received energy across all the $M$ antennas such that
\begin{align}
\|\mathbf y_{[1]}^{\text{\rom{1}}}\|^2 \geq \|\mathbf y_{[2]}^{\text{\rom{1}}}\|^2 \geq \cdots \geq \|\mathbf y_{[N_1]}^{\text{\rom{1}}}\|^2. \label{eq:order}
\end{align}

The ET then sends to the ER the {\it ordered} indices of the $N_2$ strongest sub-bands $\mathcal{N}_2=\{[1],\cdots, [N_2]\}$. 

 \begin{figure}
 \vspace{-3ex}
\centering
\includegraphics[scale=0.25]{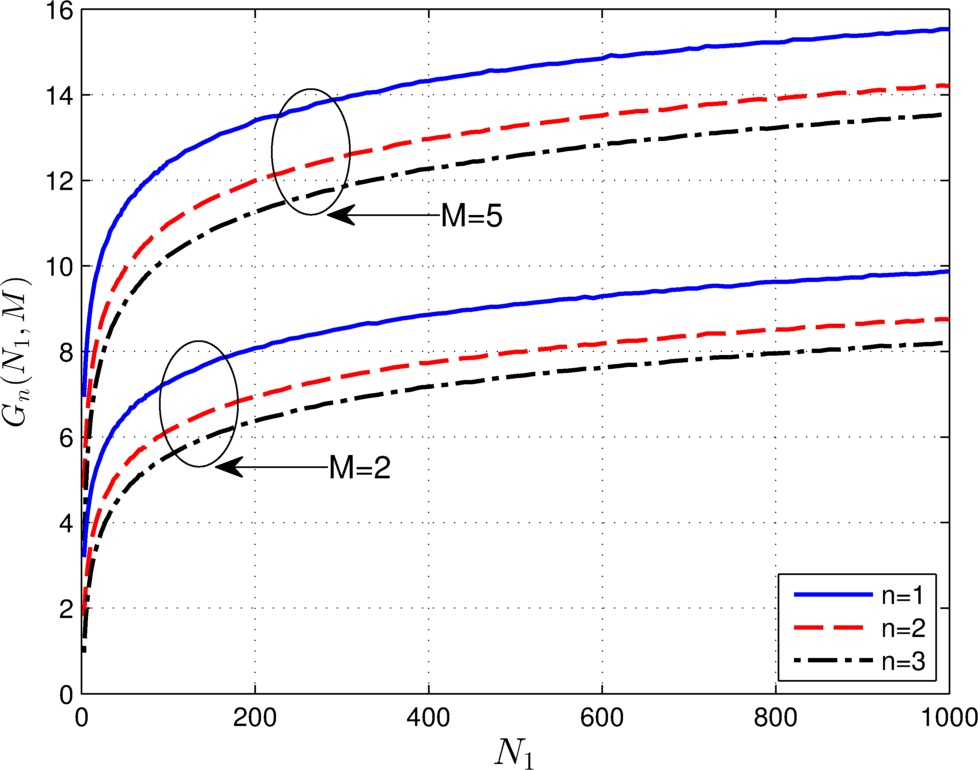} 
\caption{$G_n(N_1,M)$, $n=1,2,3$, versus $N_1$ for $M=2$ and $M=5$.}\label{F:GVsN1}
\end{figure}

\subsubsection{Phase-\rom{2} Training}
In the second phase of $\tau_2$ seconds, additional pilot signals are transmitted by the ER over the $N_2$ selected sub-bands for the ET to estimate the corresponding MISO channels.  Denote by $E_{2,n}$ the training energy used for the $n$th strongest sub-band as determined in \eqref{eq:order}, $n=1,\cdots, N_2$. With similar processing as in phase \rom{1}, the resulting  signal at the ET during phase \rom{2} over sub-band $[n]$ can be expressed as
%
\begin{align}
\mathbf y_{[n]}^{\text{\rom{2}}} = \sqrt{E_{2,n}} \mathbf h_{[n]} + \mathbf z_{[n]}^{\text{\rom{2}}},  \ n=1,\cdots, N_2,\label{eq:ynII}
\end{align}
where $\mathbf z_{[n]}^{\text{\rom{2}}}\sim \mathcal{CN}(\mathbf 0, N_0 \mathbf I_M)$ denotes the AWGN.
The ET then performs the linear minimum mean-square error (LMMSE) estimation for $\mathbf h_{[n]}$  based on $\mathbf y_{[n]}^{\text{\rom{2}}}$. \footnote{In principle, $\mathbf h_{[n]}$ can be estimated based on both observations $\mathbf y_{[n]}^{\text{\rom{1}}}$ and $\mathbf y_{[n]}^{\text{\rom{2}}}$. To simplify the processing of multi-band energy detection in phase-\rom{1} training, we assume that $\mathbf y_{[n]}^{\text{\rom{1}}}$ is only used for determining the strongest sub-bands via estimating the power $\|\mathbf h_{[n]}\|^2$ while  only $\mathbf y_{[n]}^{\text{\rom{2}}}$ is used for estimating the exact MISO channel $\mathbf h_{[n]}$.}
 To obtain the optimal LMMSE estimator, we first provide  the following lemma.

\begin{lemma}\label{lemma:exphnSq}
Given that $\mathbf h_n$ and $\mathbf h_m$ are independent $\forall n,m\in \mathcal{N}_1$ and $n\neq m$, the average  power of $\mathbf h_{[n]}$, where $[n]$ denotes the $n$th strongest sub-band determined via phase-\rom{1} training as in \eqref{eq:order},  can be expressed as
\begin{align}
R_n(N_1,E_1)\triangleq \xE\left [\left \| \mathbf h_{[n]} \right\|^2 \right] & =\frac{\beta^2 E_1 G_n(N_1,M)+\beta N_0 M}{\beta E_1 + N_0},\notag \\
& n=1,\cdots, N_2,\label{eq:Ehnsq}
\end{align}
with $G_n(N_1,M)$ an increasing function of $N_1$ and $M$ as defined in \eqref{eq:G1} and \eqref{eq:Gnplus1} in Appendix~\ref{A:usefulLemma}, which satisfies:
 \begin{align}
 G_1(N_1,M)\geq G_2(N_1,M)\geq \cdots \geq G_{N_2}(N_1,M) \geq M. \label{eq:Gorder}
\end{align}
\end{lemma}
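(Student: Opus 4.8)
The plan is to separate the channel estimation from the band selection. First I would characterize the posterior law of each channel given only its own phase-\rom{1} matched-filter output \eqref{eq:ynI}. For every $n\in\mathcal N_1$ the pair $(\mathbf h_n,\mathbf y_n^{\text{\rom{1}}})$ is jointly CSCG, with $\mathbf h_n\sim\mathcal{CN}(\mathbf 0,\beta\mathbf I_M)$ and $\mathbf y_n^{\text{\rom{1}}}=\sqrt{E_1}\,\mathbf h_n+\mathbf z_n^{\text{\rom{1}}}$, $\mathbf z_n^{\text{\rom{1}}}\sim\mathcal{CN}(\mathbf 0,N_0\mathbf I_M)$. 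Standard Gaussian conditioning (the LMMSE estimate together with its error covariance) gives posterior mean $\frac{\sqrt{E_1}\beta}{\beta E_1+N_0}\mathbf y_n^{\text{\rom{1}}}$ and posterior covariance $\frac{\beta N_0}{\beta E_1+N_0}\mathbf I_M$, so that
\[
\mathbb{E}\big[\,\|\mathbf h_n\|^2\,\big|\,\mathbf y_n^{\text{\rom{1}}}\,\big]=\frac{\beta^2 E_1}{(\beta E_1+N_0)^2}\big\|\mathbf y_n^{\text{\rom{1}}}\big\|^2+\frac{\beta N_0 M}{\beta E_1+N_0}.
\]

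Next I would average over the selection. The index $[n]$ is a deterministic function of the whole collection $\{\mathbf y_j^{\text{\rom{1}}}\}_{j\in\mathcal N_1}$, but since the bands are mutually independent, conditioning on the whole collection collapses to conditioning on $\mathbf y_{[n]}^{\text{\rom{1}}}$ alone; applying the tower property to the identity above yields
\[
R_n(N_1,E_1)=\frac{\beta^2 E_1}{(\beta E_1+N_0)^2}\,\mathbb{E}\big[\big\|\mathbf y_{[n]}^{\text{\rom{1}}}\big\|^2\big]+\frac{\beta N_0 M}{\beta E_1+N_0}.
\]
Here $\|\mathbf y_{[n]}^{\text{\rom{1}}}\|^2$ is exactly the $n$th-largest order statistic of the $N_1$ i.i.d.\ variables $\|\mathbf y_j^{\text{\rom{1}}}\|^2$, each being a sum of $M$ i.i.d.\ exponentials of mean $\beta E_1+N_0$, i.e.\ $(\beta E_1+N_0)$ times a unit-scale $\mathrm{Gamma}(M)$ variate. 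Defining $G_n(N_1,M)$ as the expected $n$th-largest among $N_1$ such unit-scale $\mathrm{Gamma}(M)$ order statistics makes $\mathbb{E}[\|\mathbf y_{[n]}^{\text{\rom{1}}}\|^2]=(\beta E_1+N_0)\,G_n(N_1,M)$, and substituting reproduces \eqref{eq:Ehnsq}. Evaluating this order-statistic expectation explicitly — integrating the CDF of the maximum and then relating consecutive ranks — is what furnishes the closed forms \eqref{eq:G1} and \eqref{eq:Gnplus1}.

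The ordering and monotonicity statements in \eqref{eq:Gorder} then follow from generic order-statistic facts. The chain $G_1\geq\cdots\geq G_{N_2}$ is immediate, since order statistics are pointwise ordered. Monotonicity in $N_1$ follows from an augmentation coupling: appending one more band can only raise or leave fixed the $n$th-largest value, so $G_n$ is nondecreasing in $N_1$. Monotonicity in $M$ follows because $\mathrm{Gamma}(M+1)$ stochastically dominates $\mathrm{Gamma}(M)$; under the common-quantile coupling every sample increases, and the $n$th order statistic is a nondecreasing function of the sample, so $G_n$ increases with $M$.

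I expect the genuine obstacle to be the lower bound $G_{N_2}(N_1,M)\geq M$. Summing the second display over all ranks shows that the $N_1$ order-statistic means $G_1,\dots,G_{N_1}$ average exactly to the single-band mean $M$ (their sum equals $\sum_{j}\mathbb{E}[V_j]=N_1M$); hence the top ranks lie above $M$ and the bottom ranks below, and the substance of the claim is that rank $N_2$ still sits on the favorable side. I would establish this from the explicit recursion for the $\mathrm{Gamma}(M)$ order-statistic means, reducing it to an elementary inequality in $(N_1,N_2,M)$; I anticipate this going through in the diversity-relevant regime where $N_1$ is taken sufficiently larger than $N_2$, and it is this inequality, rather than the conditioning and order-statistic reductions above, where the real work lies.
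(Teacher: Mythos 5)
Your derivation of \eqref{eq:Ehnsq} follows the paper's own route in all essentials. The Gaussian-conditioning step --- posterior mean $\frac{\beta\sqrt{E_1}}{\beta E_1+N_0}\mathbf y_n^{\text{\rom{1}}}$ and posterior covariance $\frac{\beta N_0}{\beta E_1+N_0}\mathbf I_M$ --- is precisely the content of the paper's Lemma~\ref{lemma:statEqv}, which restates it as the ``statistically equivalent'' representation $\mathbf h_n=\frac{\beta\sqrt{E_1}}{\beta E_1+N_0}\mathbf y_n^{\text{\rom{1}}}+\sqrt{\beta N_0/(\beta E_1+N_0)}\,\mathbf t_n$ with $\mathbf t_n$ independent of $\mathbf y_n^{\text{\rom{1}}}$; your tower-property argument over the selection event (valid because $[n]$ is measurable with respect to the collection $\{\mathbf y_j^{\text{\rom{1}}}\}$ and the bands are independent) is exactly the justification the paper leaves implicit when it ``applies Lemma~\ref{lemma:statEqv} for the $N_2$ selected sub-bands.'' Likewise, your reduction $\xE[\|\mathbf y_{[n]}^{\text{\rom{1}}}\|^2]=(\beta E_1+N_0)\,G_n(N_1,M)$, with $G_n$ the mean $n$th-largest of $N_1$ i.i.d.\ unit-scale $\mathrm{Gamma}(M)$ (Erlang) variates, is the paper's Lemma~\ref{lemma:expV}, proved there by integrating $1-F_{[n]}$ and the rank recursion \eqref{eq:Fvnplus1}, just as you sketch. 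Your coupling arguments for \eqref{eq:Gorder} --- pointwise ordering of ranks, augmentation in $N_1$, and quantile coupling via stochastic dominance of $\mathrm{Gamma}(M+1)$ over $\mathrm{Gamma}(M)$ --- are correct and in fact go beyond the paper, whose Appendix~\ref{A:usefulLemma} merely asserts that \eqref{eq:Gorder} ``can be verified as well.''

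Your suspicion about the final inequality $G_{N_2}(N_1,M)\geq M$ is well founded, and this is the one place where neither you nor the paper closes the argument. As your sum identity $\sum_{n=1}^{N_1}G_n(N_1,M)=N_1M$ shows, the bottom ranks sit strictly below $M$, and the claim is actually false at the boundary $N_1=N_2$ permitted by the lemma's hypotheses: for $M=1$ the R\'enyi representation gives $G_n(N_1,1)=\sum_{k=n}^{N_1}1/k$, so $G_2(2,1)=1/2<1=M$, and more generally $G_{N_2}(N_1,1)\geq 1$ requires roughly $N_1\gtrsim e\,N_2$. So the restriction you anticipate (``$N_1$ sufficiently larger than $N_2$'') is genuinely needed, not an artifact of your method; the inequality matters downstream, since $R_n(N_1,E_1)$ is increasing in $E_1$ precisely when $G_n(N_1,M)\geq M$, yet the paper supplies no proof of it. In short, your proposal matches the paper's proof everywhere the paper actually gives one, and your honest flag on $G_{N_2}\geq M$ identifies a gap the paper glosses over rather than a defect of your approach.
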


\begin{IEEEproof}
The proof of Lemma~\ref{lemma:exphnSq} requires some basic knowledge of order statistic \cite{546}, which is briefly introduced in Appendix~\ref{A:orderStatistic}. The detailed proof is given in Appendix~\ref{A:exphnSq}, following a useful lemma presented in Appendix~\ref{A:usefulLemma}.
\end{IEEEproof}

\begin{figure}
\centering
\includegraphics[scale=0.25]{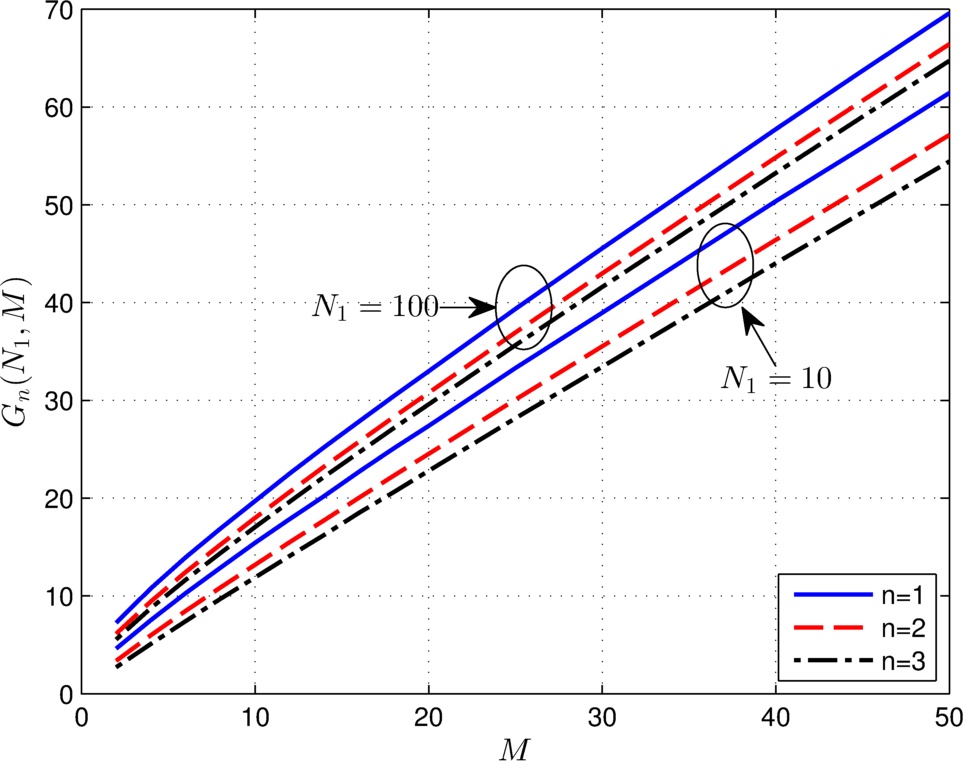}
\caption{$G_n(N_1,M)$, $n=1,2,3$, versus $M$ for $N_1=10$ and $N_1=100$.}\label{F:GVsM}
\end{figure}

It can be inferred from Lemma~\ref{lemma:exphnSq}  that $R_n(N_1,E_1)$ monotonically increases with the training-energy-to-noise ratio $E_1/N_0$. In the extreme case when $E_1/N_0\rightarrow 0$, for which the energy comparison in \eqref{eq:order} is solely determined by the additive noise and hence is unreliable,  $R_n(N_1,E_1)$ in \eqref{eq:Ehnsq} reduces to $\beta M$, $\forall n$, consistent with  the assumed channel statistics in \eqref{eq:hn}. In this case, no frequency-diversity gain is achieved. On the other hand, as $E_1/N_0\rightarrow \infty$, $R_n(N_1,E_1)$ approaches to its upper bound $\beta G_n(N_1,M)$. Thus, the ratio $G_n(N_1,M)/M$ can be interpreted as the maximum frequency-diversity gain achievable for the $n$th strongest sub-channel by training  $N_1$ independent sub-bands. As an illustration, Fig.~\ref{F:GVsN1} and Fig.~\ref{F:GVsM} plot the numerical values of $G_n(N_1,M)$ for $n=1,2,3$ against $N_1$ and $M$, respectively. It is observed that while $G_n(N_1,M)$ increases linearly with $M$, it scales with $N_1$ only in a logarithmic manner.

  With Lemma~\ref{lemma:exphnSq}, the LMMSE estimator of the MISO channels corresponding to the $N_2$  selected sub-bands in phase-\rom{2} training   can be obtained as follows.
\begin{lemma}\label{lemma:LMMSE}
The LMMSE estimator $\hat {\mathbf h}_{[n]}$ for the MISO channel $\mathbf h_{[n]}$  based on \eqref{eq:ynII} is
\begin{align}
\hat {\mathbf h}_{[n]}=\frac{\sqrt{E_{2,n}}R_n(N_1,E_1)}{E_{2,n}R_n(N_1,E_1)+N_0 M}\mathbf y_{[n]}^{\text{\rom{2}}}, \ n=1,\cdots, N_2. \label{eq:hnLMMSE}
\end{align}
Define the channel estimation error as $\tilde{\mathbf h}_{[n]} \triangleq \mathbf h_{[n]}-\hat {\mathbf h}_{[n]}$. We then have
\begin{align}
&\xE\left[\|\tilde{\mathbf h}_{[n]}\|^2 \right]=\frac{N_0MR_n(N_1,E_1) }{E_{2,n} R_n(N_1,E_1) + N_0 M}, \label{eq:mse}\\
&\xE\left[\|\hat {\mathbf h}_{[n]}\|^2 \right]=\frac{E_{2,n} R_n^2(N_1,E_1)}{E_{2,n}R_n(N_1,E_1)+N_0M},\label{eq:ehhat}\\
&\xE\left[ \tilde {\mathbf h}_{[n]}^H\hat {\mathbf h}_{[n]} \right]=0, \ n=1,\cdots, N_2.\label{eq:ecross}
\end{align}
\end{lemma}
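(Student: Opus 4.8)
The plan is to exploit the fact that the LMMSE estimator, unlike the general MMSE estimator, depends only on the first- and second-order statistics of the pair $(\mathbf h_{[n]}, \mathbf y_{[n]}^{\text{\rom{2}}})$, so that I never need the (complicated, order-statistics-induced) full distribution of the ordered channel $\mathbf h_{[n]}$. Writing $R_n$ for $R_n(N_1,E_1)$, the whole argument then reduces to identifying the mean and covariance of $\mathbf h_{[n]}$ and feeding them into the textbook LMMSE formulas.

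First I would establish the key structural fact that
\[
\xE[\mathbf h_{[n]}]=\mathbf 0, \qquad \xE\!\left[\mathbf h_{[n]}\mathbf h_{[n]}^H\right]=\frac{R_n}{M}\,\mathbf I_M .
\]
The trace identity $\tr\!\left(\xE[\mathbf h_{[n]}\mathbf h_{[n]}^H]\right)=\xE[\|\mathbf h_{[n]}\|^2]=R_n$, supplied by Lemma~\ref{lemma:exphnSq}, fixes the scaling; the real content is that the covariance is \emph{isotropic}. This is the step I expect to be the main obstacle, because the selection in \eqref{eq:order} makes $\mathbf h_{[n]}$ non-Gaussian and statistically coupled to the ordering. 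The argument I would use is one of rotational invariance: the collection $\{\mathbf h_n\}_{n\in\mathcal N_1}$ is i.i.d. $\mathcal{CN}(\mathbf 0,\beta\mathbf I_M)$ and the phase-\rom{1} noise in \eqref{eq:ynI} is isotropic, so the joint law of the pairs $\{(\mathbf h_n,\mathbf z_n^{\text{\rom{1}}})\}$ is invariant under the simultaneous unitary map $(\mathbf h_n,\mathbf z_n^{\text{\rom{1}}})\mapsto(\mathbf U\mathbf h_n,\mathbf U\mathbf z_n^{\text{\rom{1}}})$ for any unitary $\mathbf U$, under which $\mathbf y_n^{\text{\rom{1}}}\mapsto\mathbf U\mathbf y_n^{\text{\rom{1}}}$. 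Crucially, the ranking in \eqref{eq:order} depends only on $\|\mathbf y_n^{\text{\rom{1}}}\|^2$, which is unchanged by $\mathbf U$; hence the index map $n\mapsto[n]$ is preserved and the marginal law of $\mathbf h_{[n]}$ is invariant under $\mathbf h_{[n]}\mapsto\mathbf U\mathbf h_{[n]}$ for every unitary $\mathbf U$. Invariance under all such $\mathbf U$ forces the mean to vanish and the covariance to be a scalar multiple of $\mathbf I_M$, which together with the trace gives the displayed identity.

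Next I would assemble the second-order statistics the estimator needs. Since the phase-\rom{2} noise $\mathbf z_{[n]}^{\text{\rom{2}}}$ is a fresh, zero-mean realization that is independent of both $\mathbf h_{[n]}$ and of all the phase-\rom{1} quantities that determine the ordering, from \eqref{eq:ynII} I obtain the cross-covariance $\xE[\mathbf h_{[n]}(\mathbf y_{[n]}^{\text{\rom{2}}})^H]=\sqrt{E_{2,n}}\,\tfrac{R_n}{M}\mathbf I_M$ and the observation covariance $\xE[\mathbf y_{[n]}^{\text{\rom{2}}}(\mathbf y_{[n]}^{\text{\rom{2}}})^H]=\big(E_{2,n}\tfrac{R_n}{M}+N_0\big)\mathbf I_M$. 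Substituting these into the standard linear-MMSE formula $\hat{\mathbf h}_{[n]}=\xE[\mathbf h_{[n]}(\mathbf y_{[n]}^{\text{\rom{2}}})^H]\big(\xE[\mathbf y_{[n]}^{\text{\rom{2}}}(\mathbf y_{[n]}^{\text{\rom{2}}})^H]\big)^{-1}\mathbf y_{[n]}^{\text{\rom{2}}}$ and cancelling the common factor $1/M$ yields \eqref{eq:hnLMMSE} directly.

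Finally I would read off the error statistics. The LMMSE error covariance $\xE[\mathbf h_{[n]}\mathbf h_{[n]}^H]-\xE[\mathbf h_{[n]}(\mathbf y_{[n]}^{\text{\rom{2}}})^H]\big(\xE[\mathbf y_{[n]}^{\text{\rom{2}}}(\mathbf y_{[n]}^{\text{\rom{2}}})^H]\big)^{-1}\xE[\mathbf y_{[n]}^{\text{\rom{2}}}\mathbf h_{[n]}^H]$ evaluates to $\tfrac{N_0 R_n/M}{E_{2,n}R_n/M+N_0}\mathbf I_M$, and taking its trace gives \eqref{eq:mse}. For \eqref{eq:ecross} I would invoke the orthogonality principle of LMMSE estimation: the error $\tilde{\mathbf h}_{[n]}$ is uncorrelated with the observation $\mathbf y_{[n]}^{\text{\rom{2}}}$, and since $\hat{\mathbf h}_{[n]}$ is a linear function of $\mathbf y_{[n]}^{\text{\rom{2}}}$, it follows that $\xE[\tilde{\mathbf h}_{[n]}^H\hat{\mathbf h}_{[n]}]=0$. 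Then \eqref{eq:ehhat} comes for free from the decomposition $\mathbf h_{[n]}=\hat{\mathbf h}_{[n]}+\tilde{\mathbf h}_{[n]}$ together with \eqref{eq:ecross}, which give $\xE[\|\hat{\mathbf h}_{[n]}\|^2]=\xE[\|\mathbf h_{[n]}\|^2]-\xE[\|\tilde{\mathbf h}_{[n]}\|^2]=R_n-\tfrac{N_0MR_n}{E_{2,n}R_n+N_0M}$, which simplifies to the stated expression.
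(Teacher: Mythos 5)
Your proof is correct, and it differs from the paper's in one substantive way. The paper's proof simply asserts that $\mathbf h_{[n]}$ and $\mathbf y_{[n]}^{\text{\rom{2}}}$ are zero-mean vectors with i.i.d.\ entries, posits on that basis the scalar form $\hat{\mathbf h}_{[n]}=b_n\mathbf y_{[n]}^{\text{\rom{2}}}$, and finds $b_n$ by expanding the MSE and setting its derivative with respect to $b_n^*$ to zero; it then verifies \eqref{eq:ehhat} via $\xE[\|\hat{\mathbf h}_{[n]}\|^2]=|b_n|^2\xE[\|\mathbf y_{[n]}^{\text{\rom{2}}}\|^2]$ and \eqref{eq:ecross} by computing $\xE[\mathbf h_{[n]}^H\hat{\mathbf h}_{[n]}]=\xE[\|\hat{\mathbf h}_{[n]}\|^2]$ directly. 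You instead start from the matrix Wiener filter $\mathbf C_{hy}\mathbf C_{yy}^{-1}\mathbf y_{[n]}^{\text{\rom{2}}}$ and \emph{prove} the structural fact it needs, namely $\xE[\mathbf h_{[n]}]=\mathbf 0$ and $\xE[\mathbf h_{[n]}\mathbf h_{[n]}^H]=(R_n/M)\,\mathbf I_M$, by the unitary-invariance argument: the ordering in \eqref{eq:order} depends only on the norms $\|\mathbf y_n^{\text{\rom{1}}}\|^2$ and hence commutes with any simultaneous rotation of all channels and phase-\rom{1} noises. This is exactly the point the paper glosses over, and your version is actually more careful than the paper's phrasing: after norm-based selection the entries of $\mathbf h_{[n]}$ are exchangeable and uncorrelated but \emph{not} independent (the ordering couples them through the norm), so ``i.i.d.\ entries'' is an overstatement — but only second-order isotropy is needed for LMMSE, and that is precisely what your symmetry argument delivers, simultaneously justifying why the optimal linear filter collapses to a scalar coefficient. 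Your remaining steps (the error-covariance formula for \eqref{eq:mse}, the orthogonality principle for \eqref{eq:ecross}, and the energy decomposition for \eqref{eq:ehhat}) are an equivalent repackaging of the paper's direct calculations. Net effect: the paper's route is more elementary and computational, while yours makes rigorous the one non-obvious probabilistic step and obtains the estimator form as a conclusion rather than an assumption.
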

\begin{IEEEproof}
Please refer to Appendix~\ref{A:LMMSE}.
\end{IEEEproof}

\subsection{Net Harvested Energy Maximization}
After  the two-phase training, energy beamforming is performed by the ET over the $N_2$ selected sub-bands based on the estimated MISO channels $\{\hat{\mathbf h}_{[n]}\}_{n=1}^{N_2}$ for WET during the remaining time of $T-\tau_1-\tau_2$ seconds. According to \eqref{eq:optxn}, the energy beamforming vector for the $n$th selected sub-band is set as  $\mathbf s_{[n]}=\sqrt{P_s}\hat{\mathbf h}_{[n]}/\|\hat{\mathbf h}_{[n]}\|$, $n=1,\cdots, N_2$.
The resulting  energy harvested at the ER can be expressed as\footnote{ For notational simplicity, we assume that $T$ is sufficiently large so that $T\gg \tau_1+\tau_2$; as a result, the time overhead for channel training is ignored (but energy cost of channel training remains).If the above assumption does not hold, the results derived in this paper are still valid by simply replacing $T$ with $T-\tau_1-\tau_2$ with any given $\tau_1$ and $\tau_2$.}
\begin{align}
\hat Q &= \eta T P_s \sum_{n=1}^{N_2} \frac{\left|\mathbf h_{[n]}^H \hat{\mathbf h}_{[n]}\right|^2}{\|\hat {\mathbf h}_{[n]}\|^2} \notag \\
&=\eta T P_s \sum_{n=1}^{N_2}\Big( \|\hat{\mathbf h}_{[n]} \|^2 + \frac{|\tilde{\mathbf h}_{[n]}^H \hat{\mathbf h}_{[n]}|^2}{\|\hat {\mathbf h}_{[n]}\|^2}
+\tilde{\mathbf h}_{[n]}^H \hat{\mathbf h}_{[n]} + \hat{\mathbf h}_{[n]}^H\tilde{\mathbf h}_{[n]}\Big),\label{eq:hatQ2}
\end{align}
where we have used the identity $\mathbf h_{[n]}=\hat{\mathbf h}_{[n]}+ \tilde{\mathbf h}_{[n]}$ in \eqref{eq:hatQ2}. Based on \eqref{eq:mse}-\eqref{eq:ecross}, the average harvested energy at the ER can be expressed  as
\begin{align}
\bar Q & (N_1,E_1,\{E_{2,n}\}) = \xE\left[ \hat Q\right] \notag \\ 
&=\eta T P_s \sum_{n=1}^{N_2} R_n(N_1,E_1)\Big(1-\frac{(M-1)N_0}{E_{2,n}R_n(N_1,E_1)+N_0M}\Big). \label{eq:barQ}
\end{align}

It is observed from \eqref{eq:barQ} that for each of the $N_2$ selected sub-bands, the average harvested energy is given by a difference of two terms. The first term, $\eta T P_s R_n(N_1,E_1)$, is the average harvested energy when energy beamforming is based on the perfect knowledge of $\mathbf h_{[n]}$ over sub-band $[n]$, where $[n]$ is the $n$th strongest sub-band determined  via phase-\rom{1} training among the $N_1$ trained sub-bands  each with training energy $E_1$. 
 The second term can be interpreted as the loss in energy beamforming performance due to the error in the  estimated MISO channel $\hat{\mathbf h}_{[n]}$ in phase-\rom{2} training. For the extreme scenario with $E_{2,n}/N_0 \rightarrow \infty$ so that $\mathbf h_{[n]}$ is perfectly estimated, or in SISO system ($M=1$) for which no energy beamforming can be applied, the second term in \eqref{eq:barQ} vanishes, as expected. 

 The \emph{net} harvested energy at the ER, which is the average harvested energy offset by  that used for sending pilot signals in the two-phase training, can be expressed as 
\begin{align}\label{eq:Qnet}
 \hspace{-2ex} \Qnet & (N_1,E_1,\{E_{2,n}\}) =\bar Q(N_1,E_1,\{E_{2,n}\})- E_1N_1-\sum_{n=1}^{N_2}E_{2,n}.
\end{align}
 The problem of finding the optimal training design to maximize  $\Qnet$ can be formulated as
\begin{align}
\mathrm{(P1):} \qquad \underset{ N_1, E_1, \{E_{2,n}\}}{\max}   \quad  & \Qnet  (N_1,E_1,\{E_{2,n}\}) \notag \\
\text{subject to} \quad &  N_2\leq N_1 \leq N, \notag \\
& E_1\geq 0,\  E_{2,n}\geq 0, \ n=1,\cdots, N_2. \notag
\end{align}
Note that in (P1), we have assumed that there is sufficiently large initial energy stored at the ER in the first block for sending training signals. Under this assumption and considering the fact that our optimized net energy $\Qnet$ in (P1) is strictly positive and generally much larger than that required for training (for implementing other more energy-consuming functions at the ER such as sensing and data transmission), we further assume that there is always enough energy available to send the training signals at the beginning of each subsequent block.

In the following sections, we first develop an efficient algorithm to obtain the optimal solution to (P1), and then investigate the problem in the asymptotic regime with large $M$ or large $N$ to gain useful insights.

\section{Optimal Training Design}\label{sec:optimalTraining}
To find the optimal solution to (P1), we first obtain the optimal training energy $\{E_{2,n}\}_{n=1}^{N_2}$ in phase \rom{2} with $N_1$ and $E_1$ fixed. In this case, (P1) can be decoupled into $N_2$ parallel sub-problems. By discarding constant terms, the $n$th sub-problem can be formulated as
\begin{equation}\label{P:solveE2}
\begin{aligned}
\underset{E_{2,n}\geq 0}{\min} & \  \frac{(M-1)N_0\eta T P_s R_n(N_1,E_1)}{E_{2,n}R_n(N_1,E_1)+N_0M}+E_{2,n}.
\end{aligned}
\end{equation}
Problem \eqref{P:solveE2} is convex, whose optimal solution can be expressed in closed-form as
\begin{align}
E_{2,n}^\star(N_1,E_1)= & \left[\sqrt{\eta T P_s(M-1)N_0} -\frac{N_0M}{R_n(N_1,E_1)}\right]^+,\notag \\
 & n=1,\cdots, N_2.\label{eq:E2star}
\end{align}
The solution in \eqref{eq:E2star} implies that non-zero training energy should be allocated for the $n$th strongest sub-band in phase \rom{2} only if its expected channel power $R_n(N_1,E_1)$ is sufficiently large. Furthermore, \eqref{eq:E2star} has a nice water-filling interpretation, with fixed water level $\sqrt{\eta T P_s(M-1)N_0}$ and different base levels that depend on the expected channel power gain $R_n(N_1,E_1)$. For sub-bands with larger $R_n(N_1,E_1)$, more training energy should be allocated to achieve better MISO channel estimation, as expected.

The corresponding optimal value of \eqref{P:solveE2} is
\begin{equation}\label{eq:vnstar}
\small
\begin{aligned}
\hspace{-2ex} v_n^\star(N_1,E_1)=\begin{cases}
\frac{M-1}{M} \eta T P_s R_n(N_1,E_1)  & \hspace{-6ex} \text{ if } R_n(N_1,E_1)\leq \alpha,\\
2\sqrt{(M-1)N_0 \eta TP_s}-\frac{N_0 M}{R_n(N_1,E_1)}, & \text{ otherwise},
\end{cases}
\end{aligned}
\end{equation}
where $\alpha \triangleq \sqrt{N_0}M / \sqrt{\eta T P_s (M-1)}$.
By substituting the above optimal value of \eqref{P:solveE2} into \eqref{eq:Qnet}, the objective function of (P1) in terms of $N_1$ and $E_1$ only can be expressed as
\begin{align}\label{eq:QnetN1E1}
\Qnet(N_1,E_1) = \sum_{n=1}^{N_2} \Big[ \eta T P_s R_n(N_1, E_1) - v_n^\star(N_1,E_1)\Big]-N_1E_1.
\end{align}
As a result, (P1) reduces to
\begin{equation}\label{P:E1N1}
\begin{aligned}
\underset{E_1, N_1}{\max} \quad & \Qnet (N_1,E_1)  \\
\text{subject to} \quad &  N_2\leq N_1 \leq N, \ E_1\geq 0. 
\end{aligned}
\end{equation}

To find the optimal solution to problem \eqref{P:E1N1}, we first obtain the optimal training energy $E_1$ with $N_1$ fixed by solving the following problem,
\begin{equation}\label{P:E1}
\begin{aligned}
\underset{E_1\geq 0}{\max} \quad & \Qnet (N_1,E_1).
\end{aligned}
\end{equation}
Denote by $E_1^\star(N_1)$ and $\Qnet^\star(N_1)$ the optimal solution and optimal value of problem \eqref{P:E1}, respectively. The original problem (P1)  then reduces to determining the optimal number of training sub-bands $N_1$  as
\begin{align}
N_1^{\star}=\arg\underset{N_2\leq N_1 \leq N}{\max} \Qnet^{\star}(N_1),\label{eq:N1star}
\end{align}
which can be easily solved by one-dimensional search.
Therefore, the remaining task for solving (P1) is to find the optimal solution to problem \eqref{P:E1}. To achieve  this end, it is first noted  from \eqref{eq:Ehnsq} that for any fixed $N_1$, as the training energy $E_1$ varies from $0$ to $\infty$, $R_n(N_1,E_1)$ monotonically increases from $\beta M$ to $\beta G_n(N_1,M)$, i.e.,
\begin{align}
\beta M \leq R_n(N_1,E_1) \leq \beta G_n(N_1,M), \ \forall E_1\geq 0, \ n=1,\cdots, N_2. \label{eq:Rhbound}
\end{align}
Further define
\begin{align}
\Gamma\triangleq \frac{\eta T P_s \beta^2}{N_0} \label{eq:Gamma}
\end{align}
as the two-way effective signal-to-noise ratio (ESNR), where the term $\beta^2$ captures the effect of two-way signal attenuation due to both the reverse-link training and forward-link energy transmission. Based on the property of $G_n(N_1,M)$ given in \eqref{eq:Gorder}, problem \eqref{P:E1} can then be solved by separately considering the following three cases according to the intervals in which $\alpha$ lies, as illustrated in Fig.~\ref{F:ThreeCases}.

\begin{figure}
\centering
\includegraphics[scale=0.9]{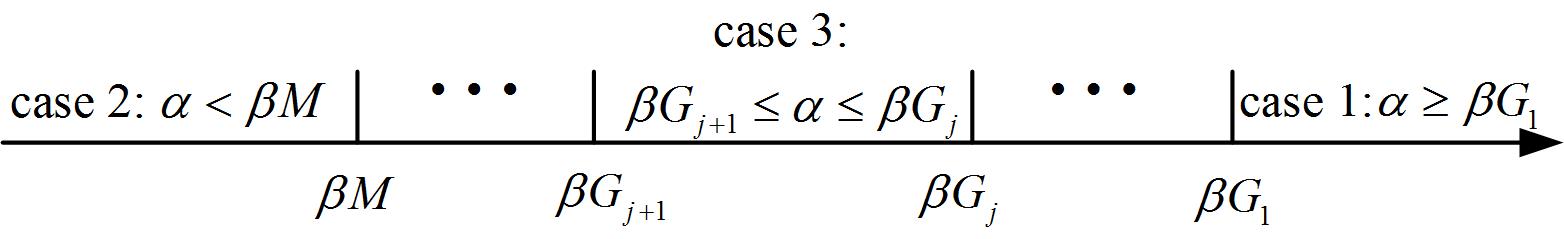}
\caption{Three different cases for solving problem \eqref{P:E1}.}\label{F:ThreeCases}
\end{figure}

{\it Case 1: $\alpha \geq \beta G_1 (N_1,M)$}, or $\Gamma \leq \frac{M^2}{(M-1)G_1^2(N_1,M)}$, which corresponds to the low-ESNR regime or the so-called harsh environment for WET by taking into account all the relevant factors including the channel conditions, the power limits at the ET, and the achievable energy-beamforming/frequency-diversity gains.  
  In this case, it follows from \eqref{eq:Rhbound} that $R_n(N_1,E_1)\leq \alpha$ and $E_{2,n}^\star(N_1,E_1)=0$, $\forall n=1,\cdots, N_2$ and $E_1\geq 0$. In other words, phase-\rom{2} training should not be performed due to the limited energy beamforming gains. Therefore, $\Qnet(N_1,E_1)$ in \eqref{eq:QnetN1E1} reduces to
\begin{align}\label{eq:QnetCase1}
\Qnet(N_1,E_1)=\eta T P_s \sum_{n=1}^{N_2} \frac{R_n(N_1,E_1)}{M}-E_1N_1, \ \forall E_1 \geq 0.
 \end{align}
By substituting $R_n(N_1,E_1)$ with  \eqref{eq:Ehnsq}, problem \eqref{P:E1} reduces to
\begin{equation}\label{P:E1case1}
\begin{aligned}
\underset{E_1\geq 0}{\max}\  \eta T P_s \beta  \sum_{n=1}^{N_2} \frac{\beta E_1 G_n(N_1,M)/M+ N_0}{\beta E_1 + N_0}-E_1N_1,
\end{aligned}
\end{equation}
which is a convex optimization problem with the optimal solution given by
\begin{align}
E_1^{\star}(N_1)=\sqrt{\eta T P_s N_0} \left[\sqrt{\frac{\sum_{n=1}^{N_2}\left(\frac{G_n(N_1,M)}{M}-1 \right)}{N_1}} -\frac{1}{\sqrt{\Gamma}}\right]^+.\label{eq:E1N1Case1}
\end{align}
The corresponding optimal value of \eqref{P:E1case1} can be expressed as
\begin{equation}\label{eq:QnetStarCase1}
\begin{aligned}
\Qnet^\star(N_1)=\begin{cases}
\eta T P_s \beta N_2, & \text{ if } \sum_{n=1}^{N_2} \left(\frac{G_n(N_1,M)}{M}-1\right)<\frac{N_1}{\Gamma} \\
\eta T P_s \beta\left( N_2 + \left(\sqrt{\sum_{n=1}^{N_2} \big(\frac{G_n(N_1,M)}{M}-1 \big)}-\sqrt{\frac{N_1}{\Gamma}}\right)^2\right), & \text {otherwise}.
\end{cases}
\end{aligned}
\end{equation}

It is observed from \eqref{eq:E1N1Case1} that for any fixed $N_1$, non-zero training energy should be allocated in phase \rom{1} only if the frequency-diversity gain normalized by $N_1$ is sufficiently large. In this case, the {\it net} harvested energy is given by a summation of that achieved in the case without CSI at the ET, $\eta T P_s \beta N_2$, and an additional gain due to the exploitation of frequency diversity, as shown in \eqref{eq:QnetStarCase1}.



{\it Case 2: $\alpha < \beta M$}, or $\Gamma > 1/(M-1)$, which corresponds to the high-ESNR regime or a favorable WET environment. In this case, it follows from \eqref{eq:Rhbound} that $R_n(N_1,E_1)  > \alpha$ and $E_{2,n}^\star(N_1,E_1)>0$, $\forall n=1,\cdots, N_2$ and $E_1\geq 0$. In other words, regardless of the training energy $E_1$ in phase \rom{1}, non-zero training energy should be allocated to each of the $N_2$ selected sub-bands in phase \rom{2} due to the potentially large beamforming gains.  As a result, it follows from \eqref{eq:vnstar} that $\Qnet(N_1,E_1)$ in \eqref{eq:QnetN1E1} can be expressed as
\begin{align}
\Qnet(N_1,E_1)=&\sum_{n=1}^{N_2} \Big[\eta T P_s R_n(N_1,E_1)- 2 \sqrt{(M-1)N_0 \eta T P_s} \notag \\
 &+ \frac{N_0M}{R_n(N_1,E_1)} \Big]-N_1E_1. \label{eq:QnetN1E1Case2}
\end{align}
By substituting $R_n(N_1,E_1)$ with \eqref{eq:Ehnsq} and after discarding constant terms, problem \eqref{P:E1} can be reformulated as
\begin{equation}\label{P:E1case2}
\begin{aligned}
\underset{E_1\geq 0}{\max}\  \eta & T P_s \beta  \sum_{n=1}^{N_2} \frac{\beta E_1 G_n(N_1,M)+ N_0 M}{\beta E_1
 + N_0} \\
 & +\frac{N_0M}{\beta} \sum_{n=1}^{N_2} \frac{\beta E_1 + N_0}{\beta E_1 G_n(N_1,M)+ N_0 M}-E_1N_1.
\end{aligned}
\end{equation}

Problem \eqref{P:E1case2} belongs to the class of optimizing the sum of linear fractional functions, which is difficult to be optimally solved in general. However, with some manipulations, problem \eqref{P:E1case2} can be simplified into
\begin{align}\label{P:caseIISimp}
\underset{E_1\geq 0}{\min}\  f(E_1)\triangleq \frac{b_0}{E_1+c_0}+E_1N_1-\sum_{n=1}^{N_2} \frac{b_n}{E_1+ c_n} ,
\end{align}
where
\begin{equation}\label{eq:coeffs}
\begin{aligned}
b_0&=\eta T P_s N_0 \sum_{n=1}^{N_2} \left(G_n(N_1,M)-M \right), \ c_0=\frac{N_0}{\beta},\\
b_n&=\frac{N_0^2 M \left( 1- \frac{M}{G_n(N_1,M)}\right)}{\beta^2 G_n(N_1,M)}, \ c_n=\frac{N_0 M}{\beta G_n(N_1,M)}, \ \forall n.
\end{aligned}
\end{equation}

Problem \eqref{P:caseIISimp} is still non-convex due to the non-convex objective function. However, by examining its Karush-Kuhn-Tucker (KKT) necessary conditions for global optimality (not necessarily sufficient due to non-convexity in general), the optimal solution to problem \eqref{P:caseIISimp} must belong to the set
\begin{align}
\mathcal{E}_1\triangleq \{0\}\cup \left \{E_1>0 \Big | \frac{\partial f(E_1)}{\partial E_1}=0 \right \}. \notag
\end{align}
By taking the derivative of the function $f(E_1)$ in \eqref{P:caseIISimp} and after some algebraic manipulations, it can be shown that finding the stationary points satisfying $\frac{\partial f(E_1)}{\partial E_1}=0$ is equivalent to determining the roots of the following polynomial equation,
\begin{equation*}
\begin{aligned}
N_1(E_1+c_0)^2  \prod_{i=1}^{N_2} & (E_1+c_i)^2  -b_0 \prod_{i=1}^{N_2} (E_1+c_i)^2\\
& + (E_1+c_0)^2\sum_{n=1}^{N_2} b_n \prod_{i=1,i\neq n}^{N_2} (E_1+c_i)^2=0,
\end{aligned}
\end{equation*}
which has order $2N_2+2$. The cardinality of $\mathcal{E}_1$ is thus bounded by $|\mathcal{E}_1|\leq 2N_2+3$. Therefore, the optimal solution to problem \eqref{P:caseIISimp}  can be readily obtained by comparing at most $2N_2+3$ candidate solutions as
\begin{align}
E_1^\star(N_1)=\arg \underset{E_1 \in \mathcal{E}_1}{\min} f(E_1).\label{eq:E1starCase2}
\end{align}
With $E_1^\star(N_1)$ determined, the corresponding optimal value of problem \eqref{P:E1} for the case $\alpha < \beta M$ can be obtained by direct substitution based on \eqref{eq:QnetN1E1Case2}.

{\it Case 3: $\beta G_{j+1}(N_1,M) \leq  \alpha \leq \beta G_j (N_1,M)$},  or $\frac{M^2}{(M-1)G_{j}^2(N_1,M)}\leq \Gamma \leq \frac{M^2}{(M-1)G_{j+1}^2(N_1,M)}$, for some $j=1,\cdots, N_2$. This corresponds to the medium-ESNR regime or a moderate WET environment. For convenience, we define $G_{N_2+1}(N_1,M)\triangleq \beta M$. Based on \eqref{eq:Gorder} and \eqref{eq:Rhbound}, we have $R_n(N_1,E_1)\leq \alpha$, $\forall n=j+1,\cdots, N_2$ and $E_1\geq 0$. 
On the other hand, for $n=1,\cdots, j$, $R_n(N_1,E_1)$ may be larger or smaller than $\alpha$, depending on the training energy $E_1$. Let $E_1$ be chosen such that $R_{k+1}(N_1,E_1)\leq  \alpha \leq R_k(N_1,E_1)$ for some $k=0,\cdots, j$. Then based on \eqref{eq:Ehnsq}, we must have $E_1\in \Big[E_1^{(k)}, E_1^{(k+1)}\Big]$, where $E_1^{(k)}\triangleq \frac{N_0(\alpha-\beta M)}{\beta (\beta G_k(N_1,M)-\alpha)}$. For convenience, we define $E_1^{(j+1)}=\infty$ and $E_1^{(0)}=0$. For $E_1$ belonging to the above interval, we have $R_n(N_1,E_1)\geq \alpha$, $\forall n=1,\cdots, k$, and $R_n(N_1,E_1)\leq \alpha$, $\forall n=k+1,\cdots, j$. Therefore, depending on the $j+1$ intervals in which $E_1$ lies, $\Qnet(N_1,E_1)$ in \eqref{eq:QnetN1E1} can be explicitly written as
\begin{align}\label{eq:QnetCase3}
 \Qnet(N_1,E_1)=f_k(E_1), \text{ for } E_1\in \big[E_1^{(k)}, E_1^{(k+1)}\big], k=0,\cdots, j,
\end{align}
where
\begin{equation}\label{eq:fk}
\small
\begin{aligned}
& f_k(E_1)\triangleq \sum_{n=1}^k\Big(\eta T P_s R_n(N_1, E_1)-2\sqrt{(M-1)N_0\eta T P_s} + \frac{N_0M}{R_n(N_1,E_1)}\Big) \\
&+ \sum_{n=k+1}^{N_2} \left(\eta T P_s R_n(N_1,E_1)-\frac{M-1}{M}\eta T P_s R_n(N_1,E_1) \right)-N_1E_1.
\end{aligned}
\end{equation}

As a result,  problem \eqref{P:E1} for Case 3 can be decomposed into $j+1$ parallel sub-problems by optimizing $E_1$ over each of the $j+1$ intervals, with the $k$th sub-problem formulated as
\begin{equation}\label{P:E1case3}
\begin{aligned}
\underset{E_1}{\max} & \  f_k(E_1)\\
\text{subject to} & \ E_1^{(k)}\leq E_1\leq E_1^{(k+1)}.
\end{aligned}
\end{equation}
Denote by $E_{1,k}^\star$ the optimal solution to the $k$th sub-problem \eqref{P:E1case3}. The solution $E_1^\star$ to problem \eqref{P:E1} is thus given by
\begin{align}
E_1^\star(N_1)=\arg \underset{E_{1,k}^\star, k=0,\cdots, j}{\max}f_k(E_{1,k}^\star),\label{eq:E1starCase3}
\end{align}
which can be readily determined. Thus, the remaining task is to solve problem \eqref{P:E1case3}. With some simple manipulations and after discarding irrelevant terms, \eqref{P:E1case3} can be reformulated as
\begin{align}\label{P:caseIIISimp}
\underset{E_1^{(k)}\leq E_1\leq E_1^{(k+1)}}{\min}\  \frac{d_0}{E_1+c_0}+E_1N_1-\sum_{n=1}^{k} \frac{b_n}{E_1+ c_n} ,
\end{align}
where $c_0$, $b_n$, and $c_n$, for $n=1,\cdots, k$, are defined in \eqref{eq:coeffs}, and
\begin{equation}
\small
\begin{aligned}
\hspace{-2ex} d_0&\triangleq\eta T P_s N_0 \Big(\sum_{n=1}^{k} (G_n(N_1,M)-M )+\sum_{n=k+1}^{N_2}\big(\frac{G_n(N_1,M)}{M}-1\big)\Big). \notag
\end{aligned}
\end{equation}
Note that problem \eqref{P:caseIIISimp} has identical structure as problem \eqref{P:caseIISimp} and thus can be optimally solved with the same technique as discussed above.

The algorithm for optimally solving the training optimization problem (P1) is summarized as follows.

\begin{algorithm}[H]
\caption{Optimal Solution to (P1)}
\label{Algo:P1}
\begin{algorithmic}[1]
\FOR{$N_1=N_2,\cdots, N$}
\IF{$\alpha \geq \beta G_1(N_1,M)$}
 \STATE Obtain $E_1^\star(N_1)$ and $\Qnet^\star(N_1)$ based on \eqref{eq:E1N1Case1} and \eqref{eq:QnetStarCase1}, respectively.
 \ELSIF {$\alpha < \beta M$}
 \STATE Obtain $E_1^\star(N_1)$ and the corresponding  $\Qnet^\star(N_1)$ based on \eqref{eq:E1starCase2} and \eqref{eq:QnetN1E1Case2}, respectively.
 \ELSE
 \STATE Determine $j\in \{1,\cdots, N_2\}$ such that $\beta G_{j+1}(N_1,M)\leq \alpha \leq \beta G_j(N_1,M)$
 \FOR{$k=0,\cdots, j$}
 \STATE Find $E_{1,k}^\star$ by solving \eqref{P:caseIIISimp} and obtain the corresponding $f_k(E_{1,k}^\star)$ based on \eqref{eq:fk}.
 \ENDFOR
 \STATE Obtain $E_1^\star(N_1)$ and the corresponding $\Qnet^\star(N_1)$ based on  \eqref{eq:E1starCase3} and \eqref{eq:QnetCase3}, respectively.
 \ENDIF
 \ENDFOR
 \STATE Obtain $N_1^\star$ and the corresponding $E_1^\star$ based on \eqref{eq:N1star}.
 \STATE Obtain $E _{2,n}^\star$, $n=1,\cdots, N_2$, based on \eqref{eq:E2star} with $N_1^\star$ and $E_1^\star$.
\end{algorithmic}
\end{algorithm}

Algorithm~\ref{Algo:P1} can be efficiently implemented since it has a polynomial complexity, as we show next. First, it is noted that for each iteration of the outer ``for'' loop, the worst-case scenario corresponds to the execution of the third case (lines 7-10). The interval searching in line 7 can be easily implemented with the bisection method of complexity $O(\log N_2)$, where $O(\cdot)$ is the standard ``big O'' notation. Line 9 involves solving polynomial equations of order $2N_2+2$, for which there exist various efficient algorithms of complexity no larger than $O((2N_2+2)^3)$ \cite{556}. Furthermore, since in the worst-case scenario, the outer (line 1) and inner (line 8) ``for'' loops will be executed by $N-N_2+1$ and $N_2+1$ times, respectively, the overall complexity of Algorithm~\ref{Algo:P1} can be expressed as $O((N-N_2+1)[\log N_2 + (N_2+1)(2N_2+2)^3])$, or $O((N-N_2)N_2^4)$ for large $N$ and $N_2$.

\section{Asymptotic Analysis}\label{sec:asympt}
In this section,   we provide an asymptotic analysis for the training optimization problem (P1) with either a large number of ET antennas $M$ (massive-array WET) or a large number of sub-bands $N$ (massive-band WET), from which we draw important insights.
\subsection{Massive-Array WET}
Massive MIMO has emerged as one of the key enabling technologies for 5G wireless communication systems due to its tremendous improvement on energy and spectrum efficiency by deploying a very large number of antennas (say, hundreds or even more) at the base stations \cite{374,497}.  For WET systems with a large number of antennas deployed at the ET, it is expected that the energy transfer efficiency can be similarly enhanced, provided that the channel is properly learned at the ET to achieve the enormous beamforming gain. To obtain the asymptotic analysis 
for (P1) with $M\gg 1$, we first present the following lemma.
\begin{lemma}\label{lemma:GnlargeM}
For the function $G_n(N_1,M)$ defined in Lemma~\ref{lemma:expV} in Appendix~\ref{A:usefulLemma}, the following asymptotic result holds:
\begin{align}
G_n(N_1,M){\rightarrow} M, \ \forall n, N_1, \text{as } M\rightarrow \infty. \label{eq:GnMassive}
\end{align}
\end{lemma}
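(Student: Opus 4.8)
The plan is to prove \eqref{eq:GnMassive} in the channel-hardening sense $G_n(N_1,M)/M\to 1$ by trapping $G_n(N_1,M)$ between two quantities that each equal $M+O(\sqrt M)$. The natural starting point is the probabilistic meaning of $G_n(N_1,M)$ discussed right after Lemma~\ref{lemma:exphnSq}: as $E_1/N_0\to\infty$ the noise in the phase-\rom{1} comparison \eqref{eq:order} becomes negligible, the ordering is governed by the true gains, and $R_n(N_1,E_1)\to\beta G_n(N_1,M)$. Hence $\beta G_n(N_1,M)$ is exactly the expectation of the $n$th largest among $N_1$ i.i.d.\ copies of $\|\mathbf h\|^2$. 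Writing $X_i\triangleq\|\mathbf h_i\|^2/\beta$, each $X_i$ is a unit-scale Gamma (shape $M$) variable with $\xE[X_i]=M$ and $\mathrm{Var}(X_i)=M$, and $G_n(N_1,M)=\xE[X_{(n)}]$ with $X_{(1)}\geq\cdots\geq X_{(N_1)}$ the order statistics. The whole argument rests on one observation: each $X_i$ fluctuates about its mean $M$ only on the scale $\sqrt M$, which is vanishingly small relative to $M$.

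First I would bound $G_1(N_1,M)=\xE[\max_i X_i]$ from above. Since $\max_i X_i-M\leq(\max_i X_i-M)^+=\max_i(X_i-M)^+\leq\sum_{i=1}^{N_1}(X_i-M)^+$, taking expectations and using the elementary estimate $\xE[(X_i-M)^+]\leq\xE|X_i-M|\leq\sqrt{\mathrm{Var}(X_i)}=\sqrt M$ (Cauchy--Schwarz) gives $G_1(N_1,M)\leq M+N_1\sqrt M$; by the monotonicity \eqref{eq:Gorder}, $G_n(N_1,M)\leq G_1(N_1,M)$, so the same upper bound holds for every $n$. A mirror-image argument applied to $\min_i X_i$ (using $\xE[(M-X_i)^+]\leq\sqrt M$) yields $G_n(N_1,M)\geq\xE[\min_i X_i]\geq M-N_1\sqrt M$ for all $n\leq N_1$; for $n\leq N_2$ one may instead simply invoke the lower bound $G_n(N_1,M)\geq M$ already recorded in \eqref{eq:Gorder}. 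Combining the two bounds,
\begin{align}
1-\frac{N_1}{\sqrt M}\leq\frac{G_n(N_1,M)}{M}\leq 1+\frac{N_1}{\sqrt M}, \notag
\end{align}
and letting $M\to\infty$ with $N_1$ (and $n$) fixed squeezes $G_n(N_1,M)/M\to 1$, which is the content of \eqref{eq:GnMassive}.

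The step that deserves the most care is interpretive rather than computational: literally $G_n(N_1,M)-M=\Theta(\sqrt M)\to\infty$, so \eqref{eq:GnMassive} cannot mean $G_n-M\to 0$; it must be read in the leading-order sense $G_n(N_1,M)=M\bigl(1+o(1)\bigr)$, which is precisely what is needed in the sequel, since it forces $R_n(N_1,E_1)/(\beta M)\to 1$ and thus collapses the frequency-diversity factor. The only genuine technical point is that concentration gives $X_{(n)}/M\to 1$ merely in probability, whereas we need convergence of the \emph{expectation} $\xE[X_{(n)}]$; the explicit first-moment bound $\xE|X_i-M|\leq\sqrt M$ used above settles this directly and avoids the separate uniform-integrability argument that a pure central-limit-theorem route would require. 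As an alternative I could instead work straight from the closed forms for $G_n(N_1,M)$ given in the appendix and show that every term beyond the leading $M$ is of lower order, but the concentration-and-squeeze route above is shorter and makes the channel-hardening mechanism transparent.
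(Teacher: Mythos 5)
Your proof is correct, and it takes a genuinely different --- and technically tighter --- route than the paper's. The paper's own proof is soft: it applies the law of large numbers to get $\frac{1}{M}\|\mathbf v_i\|^2\rightarrow \sigma_v^2$ for each $i$, then passes directly to $\xE[\|\mathbf v_{[n]}\|^2]\rightarrow \sigma_v^2 M$, silently interchanging the $M\rightarrow\infty$ limit with the expectation of an order statistic --- precisely the uniform-integrability point you flag at the end. Your argument repairs this by working with expectations from the start: using $(x)^+\leq |x|$, the union-type bound $\max_i (X_i-M)^+\leq \sum_{i=1}^{N_1}(X_i-M)^+$, and Cauchy--Schwarz $\xE|X_i-M|\leq \sqrt{\mathrm{Var}(X_i)}=\sqrt{M}$ for the Gamma-distributed $X_i$ (together with the mirror bound for the minimum), you trap $M-N_1\sqrt{M}\leq G_n(N_1,M)\leq M+N_1\sqrt{M}$ and obtain the explicit non-asymptotic rate $|G_n(N_1,M)/M-1|\leq N_1/\sqrt{M}$, rather than bare convergence; this also makes rigorous what the paper leaves implicit. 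Your interpretive reading is likewise the right one: \eqref{eq:GnMassive} must be understood as $G_n(N_1,M)=M(1+o(1))$, which is the sense actually used downstream in Lemma~\ref{lemma:largeM}. Two minor caveats, neither affecting validity: the parenthetical claim $G_n(N_1,M)-M=\Theta(\sqrt{M})\rightarrow\infty$ is guaranteed only for $n=1$ with $N_1\geq 2$ (for $n$ near $N_1$ the difference can be negative), though your point that difference convergence fails while ratio convergence holds stands; and your self-contained $\xE[\min_i X_i]$ lower bound is preferable to the alternative of citing $G_{N_2}(N_1,M)\geq M$ from \eqref{eq:Gorder}, since that inequality is questionable when $N_2$ is close to $N_1$ (for $N_2=N_1\geq 2$ one has $G_{N_1}=\xE[\min_i X_i]<M$), so it is good that your proof does not rely on it.
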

\begin{IEEEproof}
  For the $N_1$ i.i.d. CSCG random vectors $\mathbf v_i\sim \mathcal{CN}(\mathbf 0, \sigma_v^2 \mathbf I_M)$  given in Lemma~\ref{lemma:expV},  the following asymptotic result holds as $M\rightarrow \infty$ due to the law of large numbers:
\begin{align}
\frac{1}{M} \|\mathbf v_i\|^2 \rightarrow \sigma_v^2,\  \forall i=1,\cdots, N_1. \notag
\end{align}
Therefore, we have $\xE[\|\mathbf v_{[n]}\|^2]\rightarrow \sigma_v^2 M$, $\forall n$, which thus leads to \eqref{eq:GnMassive} by definition.
\end{IEEEproof}

\begin{lemma}\label{lemma:largeM}
For MISO WET systems in frequency-selective  channels with $M$ transmit antennas and $N$ independent sub-bands, as $M\rightarrow \infty$, the optimized two-phase training scheme reduces to
\begin{align}
&E_1^{\text{large-}M} \rightarrow 0, \label{eq:E1largeM}\\
&E_{2,n}^{\text{large-}M}\rightarrow \sqrt{\eta T P_sN_0 M}, \ n=1,\cdots, N_2. \label{eq:E2largeM}
\end{align}
Furthermore, the resulting net harvested energy scales linearly with $M$ as
\begin{align}
\Qnet^{\text{large-}M} \rightarrow  \eta T N_2 P_s \beta M.\label{eq:QnetLargeM}
\end{align}
\end{lemma}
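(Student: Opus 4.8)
The plan is to leverage Lemma~\ref{lemma:GnlargeM} to collapse the optimization in the large-$M$ regime. First I would substitute $G_n(N_1,M)\to M$ into the expression \eqref{eq:Ehnsq} for $R_n(N_1,E_1)$; since
\[
R_n(N_1,E_1)=\frac{\beta^2 E_1 G_n(N_1,M)+\beta N_0 M}{\beta E_1+N_0}\to \frac{\beta M(\beta E_1+N_0)}{\beta E_1+N_0}=\beta M
\]
for every fixed $E_1$, the expected channel power of each selected sub-band converges to $\beta M$ \emph{irrespective of} $E_1$ and $N_1$. Intuitively, with a massive array the power of every sub-band concentrates at $\beta M$ by the law of large numbers, so the frequency-diversity gain vanishes and phase-\rom{1} training loses its purpose. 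I would also verify that this is Case~2 ($\alpha<\beta M$): since $\Gamma$ in \eqref{eq:Gamma} is a fixed constant while $1/(M-1)\to 0$, the condition $\Gamma>1/(M-1)$ (equivalently $\alpha=\Theta(\sqrt M)\ll\beta M$) holds for all large $M$, so the optimal phase-\rom{2} energies are the strictly positive water-filling solutions \eqref{eq:E2star} and $E_1^\star(N_1)$ is obtained from the reduced problem \eqref{P:caseIISimp}.

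The main work is to show $E_1^\star\to 0$. Here I would examine the coefficients \eqref{eq:coeffs} of the objective $f(E_1)=\tfrac{b_0}{E_1+c_0}+E_1N_1-\sum_{n=1}^{N_2}\tfrac{b_n}{E_1+c_n}$. Using $G_n(N_1,M)\to M$ one gets $b_0=\eta TP_sN_0\sum_n(G_n-M)\to 0$, $c_n=\tfrac{N_0M}{\beta G_n}\to\tfrac{N_0}{\beta}=c_0$, and $b_n\to 0$; moreover $b_0/c_0\to 0$ and $b_n/c_n=\tfrac{N_0}{\beta}(1-M/G_n)\to 0$. Because $b_0,b_n\ge 0$ (the ordering \eqref{eq:Gorder} gives $G_n\ge M$), $f$ can be sandwiched: $f(0)=\tfrac{b_0}{c_0}-\sum_n\tfrac{b_n}{c_n}\to 0$, while for any fixed $\epsilon>0$ and $E_1\ge\epsilon$ one has $f(E_1)\ge E_1N_1-\sum_n\tfrac{b_n}{c_n}\ge\epsilon-o(1)$. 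Hence for $M$ large enough the minimizer must lie in $[0,\epsilon)$; as $\epsilon$ is arbitrary, $E_1^\star(N_1)\to 0$, which is \eqref{eq:E1largeM}. Since $N_1$ ranges over the fixed finite set $\{N_2,\dots,N\}$, this per-$N_1$ conclusion transfers directly to $N_1^\star$. The non-convexity of $f$ is precisely why a direct sandwich bound, rather than a first-order condition, is the clean route, and this is the delicate step.

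With $E_1^\star\to 0$ the remaining claims follow by substitution. Evaluating $R_n$ at $E_1^\star\to 0$ again gives $R_n\to\beta M$, so the water-filling solution \eqref{eq:E2star} becomes $E_{2,n}^\star\to\big[\sqrt{\eta TP_s(M-1)N_0}-\tfrac{N_0 M}{\beta M}\big]^+=\sqrt{\eta TP_s(M-1)N_0}-\tfrac{N_0}{\beta}\sim\sqrt{\eta TP_sN_0M}$, which is \eqref{eq:E2largeM}. Finally I would substitute $R_n\to\beta M$ and $E_1^\star\to 0$ into \eqref{eq:QnetN1E1Case2}; the term $\eta TP_sR_n\to\eta TP_s\beta M=\Theta(M)$ dominates the $\Theta(\sqrt M)$ beamforming-loss term $2\sqrt{(M-1)N_0\eta TP_s}$, the $\Theta(1)$ term $N_0M/R_n\to N_0/\beta$, and the vanishing training cost $N_1E_1^\star$. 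Summing the $N_2$ (fixed number of) identical leading terms yields $\Qnet^\star\to\eta T N_2 P_s\beta M$, establishing \eqref{eq:QnetLargeM}. The hardest part is the rigorous $E_1^\star\to 0$ argument; once that is in place, everything else is a limit-and-substitute computation.
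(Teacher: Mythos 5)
Your skeleton coincides with the paper's: substitute Lemma~\ref{lemma:GnlargeM} into \eqref{eq:Ehnsq} to get $R_n(N_1,E_1)\rightarrow \beta M$ irrespective of $E_1$ and $N_1$, then read \eqref{eq:E2largeM} off the water-filling solution \eqref{eq:E2star} and \eqref{eq:QnetLargeM} off \eqref{eq:QnetN1E1Case2}; these closing substitutions are exactly what the paper does, and your verification that the system sits in Case~2 for all large $M$ (since $\Gamma$ in \eqref{eq:Gamma} is fixed while $1/(M-1)\rightarrow 0$, i.e., $\alpha=\Theta(\sqrt M)\ll \beta M$) is a correct addition that the paper leaves implicit. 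Where you genuinely diverge is \eqref{eq:E1largeM}: the paper argues in one line that the limiting $\bar Q$ is independent of $E_1$ and $N_1$, hence $E_1\rightarrow 0$, whereas you attempt a rigorous sandwich on the objective $f$ of \eqref{P:caseIISimp} via the coefficients \eqref{eq:coeffs}.

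That sandwich, however, rests on a premise Lemma~\ref{lemma:GnlargeM} does not supply. You need $b_0=\eta T P_s N_0\sum_{n=1}^{N_2}\left(G_n(N_1,M)-M\right)\rightarrow 0$, but the law-of-large-numbers proof of \eqref{eq:GnMassive} only establishes $G_n(N_1,M)/M\rightarrow 1$, i.e., $G_n-M=o(M)$. In fact, since $\|\mathbf h_n\|^2$ has mean $\beta M$ and standard deviation $\beta\sqrt M$, the central limit theorem gives $G_1(N_1,M)-M=\Theta(\sqrt M)$ for fixed $N_1\geq 2$ (and $\sum_{n=1}^{N_2}(G_n-M)>0$ whenever $N_1>N_2$, by \eqref{eq:Gorder} and $\sum_{n=1}^{N_1}G_n=N_1M$), so $b_0\rightarrow\infty$ and $f(0)=b_0/c_0-\sum_n b_n/c_n\rightarrow\infty$, and your comparison of $f(0)$ against $f(E_1)$ for $E_1\geq\epsilon$ collapses; note that your claims $b_n\rightarrow 0$ and $c_n\rightarrow c_0$ do survive the precise reading, since they only require $G_n/M\rightarrow 1$. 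Worse, balancing $b_0/(E_1+c_0)$ against $E_1N_1$ places the exact minimizer at $E_1^\star=\Theta(M^{1/4})$ when $N_1>N_2$, so the literal statement $E_1^\star\rightarrow 0$ is itself meaningful only under the paper's convention of treating $G_n\rightarrow M$ as exact --- the same looseness that underlies the paper's one-line argument, which your proposal inherits and, by claiming rigor at precisely this step, slightly exceeds. The saving grace is that \eqref{eq:E2largeM} and \eqref{eq:QnetLargeM} hold regardless: $R_n(N_1,E_1)/(\beta M)$ lies in $[1,\,G_n(N_1,M)/M]$ uniformly in $E_1\geq 0$, and the optimal training cost $N_1E_1^\star=o(M)$ is sub-leading, so your limit-and-substitute computations for those two claims are sound even without $E_1^\star\rightarrow 0$; it is only the step you identified as the ``main work'' whose premise, not its mechanics, is the delicate part.
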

\begin{IEEEproof}
Please refer to Appendix~\ref{A:largeM}.
\end{IEEEproof}

Lemma~\ref{lemma:largeM} shows that due to the so-called ``channel hardening'' effect for large $M$, i.e., $\|\mathbf h_n\|^2 \rightarrow \beta M$, $\forall n=1,\cdots, N$, as $M\rightarrow \infty$, all the $N$ sub-bands have essentially identical power gain $\beta M$. Therefore, the frequency-diversity gain vanishes and hence phase-\rom{1} training is no longer required, as shown in \eqref{eq:E1largeM}. Moreover, it is observed from \eqref{eq:E2largeM} that as $M$ increases, the training energy in phase \rom{2} for each of the selected sub-band  scales with $M$ in a square root manner. Since the beamforming gain increases linearly with $M$, which dominates the training energy cost at large $M$, the \emph{net} harvested energy scales linearly with $M$, as shown in \eqref{eq:QnetLargeM}.

\begin{lemma}\label{lemma:idealLargeM}
For MISO WET systems in frequency-selective  channels with $M$ transmit antennas and $N$ independent sub-bands, as $M\rightarrow \infty$, the average harvested energy in the ideal case with perfect CSI at the ET scales with $M$ as
\begin{align}
\bar{Q}^{\text{ideal}} \rightarrow  \eta T N_2 P_s   \beta M.
\end{align}
\end{lemma}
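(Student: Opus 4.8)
The plan is to begin from the closed-form ideal harvested energy in \eqref{eq:Qmax} and simply take its expectation. Since $Q^{\text{ideal}}=\eta T P_s \sum_{n=1}^{N_2}\|\mathbf h_{[n]}\|^2$, linearity of expectation gives
\begin{align}
\bar Q^{\text{ideal}}=\xE\!\left[Q^{\text{ideal}}\right]=\eta T P_s \sum_{n=1}^{N_2}\xE\!\left[\|\mathbf h_{[n]}\|^2\right], \notag
\end{align}
so the entire task reduces to characterizing the expected power $\xE[\|\mathbf h_{[n]}\|^2]$ of the $n$th strongest among all $N$ sub-bands, where the ordering is now based on the \emph{exact} channel powers $\|\mathbf h_n\|^2$ rather than on the noisy phase-\rom{1} estimates used in the training scheme.

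Next I would connect this quantity to the function $G_n(\cdot,\cdot)$ already introduced in Lemma~\ref{lemma:exphnSq} and defined in Lemma~\ref{lemma:expV}. The ideal case corresponds precisely to ordering the full set of $N$ sub-bands noiselessly, which is exactly the $E_1/N_0\rightarrow\infty$ limit of the phase-\rom{1} ordering taken with $N_1=N$. Hence, from \eqref{eq:Ehnsq} and the discussion following Lemma~\ref{lemma:exphnSq} (where $R_n(N_1,E_1)\rightarrow\beta G_n(N_1,M)$ as $E_1/N_0\rightarrow\infty$), I can identify $\xE[\|\mathbf h_{[n]}\|^2]=\beta G_n(N,M)$ for $n=1,\dots,N_2$, which yields $\bar Q^{\text{ideal}}=\eta T P_s\beta\sum_{n=1}^{N_2}G_n(N,M)$.

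The final step invokes the channel-hardening result of Lemma~\ref{lemma:GnlargeM}, which guarantees $G_n(N,M)\rightarrow M$ as $M\rightarrow\infty$ for every $n$ and every fixed $N$. Substituting this into the sum gives $\bar Q^{\text{ideal}}\rightarrow \eta T P_s\beta\sum_{n=1}^{N_2}M=\eta T N_2 P_s\beta M$, as claimed. The underlying intuition is the same one exploited in Lemma~\ref{lemma:largeM}: as $M$ grows, the law of large numbers forces $\frac{1}{M}\|\mathbf h_n\|^2\rightarrow\beta$ for every sub-band, so all $N$ channel powers concentrate at $\beta M$ and the frequency-selective ordering becomes immaterial to leading order in $M$.

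The main obstacle I anticipate is justifying the second step rigorously, namely that it is the \emph{expectation} of the ordered power, not merely its probabilistic limit, that converges to $\beta M$. The $N$ variables $\|\mathbf h_n\|^2$ are Gamma-distributed with mean $\beta M$ but standard deviation only of order $\beta\sqrt{M}$, so the gaps between their order statistics are $O(\sqrt{M})$, which is negligible relative to the $\Theta(M)$ mean; one must confirm that these $O(\sqrt{M})$ fluctuations vanish after normalization by $M$. This is, however, already absorbed into Lemma~\ref{lemma:GnlargeM}, since $G_n(N,M)$ is defined through $\xE[\|\mathbf v_{[n]}\|^2]$ and that lemma establishes the convergence at the level of expectations. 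Once the identification $\xE[\|\mathbf h_{[n]}\|^2]=\beta G_n(N,M)$ is in place, the remainder is a direct substitution, so the result is essentially a corollary of Lemma~\ref{lemma:GnlargeM}.
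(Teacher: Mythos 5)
Your proposal is correct and follows essentially the same route as the paper's proof: take the expectation of \eqref{eq:Qmax}, identify $\xE[\|\mathbf h_{[n]}\|^2]=\beta G_n(N,M)$, and apply Lemma~\ref{lemma:GnlargeM} to conclude $\bar Q^{\text{ideal}}\rightarrow \eta T N_2 P_s \beta M$. The only cosmetic difference is that you obtain the identity $\xE[\|\mathbf h_{[n]}\|^2]=\beta G_n(N,M)$ via the $E_1/N_0\rightarrow\infty$ limit of $R_n(N,E_1)$, whereas the paper invokes the definition of $G_n(N_1,M)$ in Lemma~\ref{lemma:expV} directly---these are equivalent since the ideal ordering is exactly the noiseless ordering of the $N$ i.i.d.\ channel vectors, and as you note, Lemma~\ref{lemma:GnlargeM} already establishes the convergence at the level of expectations.
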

\begin{IEEEproof}
With perfect CSI at the ET, the maximum harvested energy for each channel realization has been obtained in \eqref{eq:Qmax}. The average value taken over all channel realizations can then be  obtained as
\begin{align}
\bar{Q}^{\text{ideal}}&=\xE\left[ Q^{\text{ideal}}\right]=\eta T P_s \sum_{n=1}^{N_2} \xE\left[ \|\mathbf h_{[n]}\|^2\right]\notag \\
&=\eta T P_s \sum_{n=1}^{N_2} \beta G_n(N,M) \label{eq:Qideal1}\\
&\rightarrow \eta T N_2 P_s  \beta M, \label{eq:Qideal2}
\end{align}
where \eqref{eq:Qideal1} follows from the definition of the function $G_n(N_1,M)$ given in Lemma~\ref{lemma:expV} in Appendix~\ref{A:usefulLemma}, and \eqref{eq:Qideal2} is true due to Lemma~\ref{lemma:GnlargeM}.
\end{IEEEproof}


It immediately follows from Lemma~\ref{lemma:largeM} and Lemma~\ref{lemma:idealLargeM} that for MISO WET systems in frequency-selective channels with $M$ transmit antennas and $N$ independent sub-bands, the proposed two-phase training scheme achieves the optimal asymptotic scaling with $M$.

\subsection{Massive-Band WET}
In this subsection, we investigate the WET system with an asymptotically large bandwidth $B$, or equivalently, with an unlimited number of sub-bands $N$. 
We first study the asymptotic scaling of the average harvested energy at the ER with $N$ under the ideal case with perfect CSI at the ET.

\begin{lemma}\label{lemma:largeNPerfect}
For MISO WET systems in frequency-selective channels with $M$ transmit antennas and $N$ independent sub-bands, as $N\rightarrow \infty$, the average harvested energy $\bar{Q}^{\text{ideal}}(N)$ in the ideal case with perfect CSI at the ET scales logarithmally with $N$, i.e., $\bar{Q}^{\text{ideal}}(N)=\Theta(\ln N)$.
\end{lemma}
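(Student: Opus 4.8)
The plan is to reduce the claim to a single extreme-value estimate. Starting from the derivation leading to \eqref{eq:Qideal1}, the average harvested energy with perfect CSI is $\bar Q^{\text{ideal}}(N)=\eta T P_s\sum_{n=1}^{N_2}\xE\big[\|\mathbf h_{[n]}\|^2\big]$, where $\|\mathbf h_{[1]}\|^2\ge\cdots\ge\|\mathbf h_{[N]}\|^2$ are the ordered antenna sum-powers of the $N$ i.i.d.\ channel vectors in \eqref{eq:hn}. Because $N_2=P_t/P_s$ is a fixed integer that does not grow with $N$, I would sandwich the sum as $\xE[\|\mathbf h_{[1]}\|^2]\le\sum_{n=1}^{N_2}\xE[\|\mathbf h_{[n]}\|^2]\le N_2\,\xE[\|\mathbf h_{[1]}\|^2]$. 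Since $\eta T P_s$ and $N_2$ are constants independent of $N$, it then suffices to show that the expected maximum sum-power $\xE[\|\mathbf h_{[1]}\|^2]$ scales as $\Theta(\ln N)$.

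Next I would characterize the tail of a single $\|\mathbf h_n\|^2$. By \eqref{eq:hn}, $\|\mathbf h_n\|^2/\beta$ is a sum of $M$ i.i.d.\ unit-mean exponentials, hence Gamma-distributed with shape $M$, with complementary CDF $\pr(\|\mathbf h_n\|^2>\beta x)=e^{-x}\sum_{k=0}^{M-1}x^k/k!$. The key feature is the $e^{-x}$ (light) tail, which places the law in the Gumbel max-domain of attraction and is precisely what makes the maximum grow logarithmically rather than polynomially. I would record the elementary two-sided bounds $\tfrac{x^{M-1}}{(M-1)!}e^{-x}\le \pr(\|\mathbf h_n\|^2>\beta x)\le \tfrac{M}{(M-1)!}x^{M-1}e^{-x}$, both valid once $x\ge M-1$.

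For the upper bound I would write $\xE[\|\mathbf h_{[1]}\|^2]=\int_0^\infty\pr(\max_n\|\mathbf h_n\|^2>u)\,du$, split the integral at $u=\beta t_0$ with $t_0=\ln N+(M-1)\ln\ln N+C$, bound the lower portion trivially by $\beta t_0$, and bound the tail portion via $\pr(\max>u)\le N\,\pr(\|\mathbf h_1\|^2>u)$ together with the tail estimate above; the choice of $t_0$ makes $N\,\pr(\|\mathbf h_1\|^2>\beta t_0)=O(1)$ so that the residual integral is $O(1)$, yielding $\xE[\|\mathbf h_{[1]}\|^2]\le\beta\,t_0+O(1)=O(\ln N)$. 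For the matching lower bound I would fix $t_1=\tfrac12\ln N$ and use $\xE[\|\mathbf h_{[1]}\|^2]\ge\beta t_1\,\pr(\max_n\|\mathbf h_n\|^2>\beta t_1)$ with $\pr(\max>\beta t_1)=1-\big(1-\pr(\|\mathbf h_1\|^2>\beta t_1)\big)^N\ge 1-e^{-N\pr(\|\mathbf h_1\|^2>\beta t_1)}$; the tail lower bound gives $N\,\pr(\|\mathbf h_1\|^2>\beta t_1)\ge \tfrac{N^{1/2}(\tfrac12\ln N)^{M-1}}{(M-1)!}\to\infty$, so $\pr(\max>\beta t_1)\to 1$ and $\xE[\|\mathbf h_{[1]}\|^2]\ge c_1\ln N$. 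Combining the two directions gives $\xE[\|\mathbf h_{[1]}\|^2]=\Theta(\ln N)$, and hence $\bar Q^{\text{ideal}}(N)=\Theta(\ln N)$.

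The main difficulty is technical rather than conceptual: I must keep the Gamma tail bounds tight enough that the upper tail integral converges while the $(M-1)\ln\ln N$ correction stays subdominant, and confirm that the lower-bound tail probability, once multiplied by $N$, genuinely diverges. It is also essential to keep $N_2$ fixed as $N\to\infty$, as this is exactly what licenses the crude sandwiching in the first step; were $N_2$ allowed to grow with $N$, the lower-ranked order statistics would no longer all be $\Theta(\ln N)$ and one would instead have to sum their individual expectations carefully. A softer alternative would be to quote classical extreme-value theory (Gumbel convergence with normalizing constant $b_N\sim\ln N$), but that route additionally requires a uniform-integrability argument to convert distributional convergence into convergence of $\xE[\|\mathbf h_{[1]}\|^2]$, which the explicit tail-integral bounds above avoid.
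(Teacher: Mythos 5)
Your proposal is correct, but it takes a genuinely different route from the paper's proof. The paper never touches the Gamma tail directly: starting from $\bar Q^{\text{ideal}}(N)=\eta T P_s\beta\sum_{n=1}^{N_2}G_n(N,M)$, it sandwiches the sum between \emph{scalar} ($M=1$) quantities by (i) proving the structural inequality $G_n(N,M)\le M\,G_1(N,1)$ via the pointwise swap $\max_i\sum_{j=1}^M|u_{ij}|^2\le\sum_{j=1}^M\max_i|u_{ij}|^2$ (Lemma~\ref{lemma:GnUB}), and (ii) using monotonicity of $G_1(N,M)$ in $M$ for the lower bound $G_1(N,M)\ge G_1(N,1)$; this reduces everything to the expected maximum of $N$ i.i.d.\ exponentials, which is \emph{exactly} the harmonic number $G_1(N,1)=\sum_{i=1}^{N}1/i=\ln N+\gamma+\epsilon_N$ by \eqref{eq:G1M1}, yielding the explicit two-sided bound $\eta T P_s\beta\ln N\le\bar Q^{\text{ideal}}(N)\le\eta T P_s\beta N_2 M\ln N$. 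You instead attack the $M$-dimensional maximum head-on: your initial sandwich $\xE[\|\mathbf h_{[1]}\|^2]\le\sum_{n=1}^{N_2}\xE[\|\mathbf h_{[n]}\|^2]\le N_2\,\xE[\|\mathbf h_{[1]}\|^2]$ plays the role of \eqref{eq:Gorder} and Lemma~\ref{lemma:GnUB}, while your Erlang tail bounds together with the split tail-integral argument (upper bound) and the Markov/Poissonization bound $\xE[Z]\ge a\,\pr(Z>a)$ with $1-(1-p)^N\ge 1-e^{-Np}$ (lower bound) replace the harmonic-number identity entirely. Both arguments are sound, and each buys something: the paper's route is shorter and produces clean explicit constants because the exact $M=1$ formula does all the asymptotic work, whereas yours is more robust --- it needs no closed form, extends verbatim to any light-tailed law in the Gumbel max-domain of attraction, and your upper estimate $\ln N+(M-1)\ln\ln N+O(1)$ for the maximum is asymptotically sharper than the paper's $N_2 M\ln N$ cap. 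The one step you leave implicit --- that the residual integral $\int_{t_0}^\infty N\,\pr\bigl(\|\mathbf h_1\|^2>u\bigr)\,du$ is $O(1)$, which does not follow from the integrand being $O(1)$ at $u=\beta t_0$ alone --- is handled by the standard incomplete-gamma estimate $\int_{t_0}^\infty u^{M-1}e^{-u}\,du=O\bigl(t_0^{M-1}e^{-t_0}\bigr)$ valid for $t_0\ge 2(M-1)$, which your choice of $t_0$ satisfies; since you flag this issue yourself, it is a presentational omission rather than a mathematical gap.
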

\begin{IEEEproof}
Please refer to Appendix~\ref{A:largeNPerfect}.
\end{IEEEproof}


\begin{lemma}\label{lemma:largeNProposed}
For MISO WET systems in frequency-selective channels with $M$ transmit antennas and $N$ independent sub-bands, as $N\rightarrow \infty$, the net harvested energy $\bar{Q}^{\text{large-}N}$ with the proposed two-phase training scheme is upper-bounded by
\begin{equation}
\small
\begin{aligned}
\bar{Q}^{\text{large-}N} \leq \eta T N_2 P_2 \beta M \bigg( 1 +\Big(\sqrt{W(\Gamma N_2 M)}-\frac{1}{\sqrt{W(\Gamma N_2 M)}} \Big)^2\bigg), \label{eq:QnetUBProposed}
\end{aligned}
\end{equation}
where $\Gamma$ is the two-way ESNR defined in \eqref{eq:Gamma}, and $W(z)$ is the Lambert W function defined by $z=W(z)e^{W(z)}$.
\end{lemma}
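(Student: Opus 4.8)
The plan is to upper-bound the optimized net energy by discarding the (nonnegative) phase-\rom{2} cost, maximizing the resulting concave surrogate over $E_1$ in closed form, invoking the logarithmic growth of $G_n(N_1,M)$ that underlies Lemma~\ref{lemma:largeNPerfect}, and finally optimizing over $N_1$, where the Lambert~W function emerges from a transcendental stationarity condition.

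First I would observe that the optimal phase-\rom{2} value in \eqref{eq:vnstar} satisfies $v_n^\star(N_1,E_1)\ge 0$ in every regime (in the branch $R_n>\alpha$ one checks $v_n^\star>\sqrt{(M-1)N_0\eta TP_s}\ge 0$). Hence, from \eqref{eq:QnetN1E1},
\[
\Qnet(N_1,E_1)\le U(N_1,E_1)\triangleq \eta TP_s\sum_{n=1}^{N_2}R_n(N_1,E_1)-N_1E_1,
\]
so that $\bar{Q}^{\text{large-}N}=\max_{N_2\le N_1\le N,\,E_1\ge 0}\Qnet(N_1,E_1)\le \max_{N_1\le N,\,E_1\ge 0}U(N_1,E_1)$. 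This relaxation amounts to granting \emph{free, perfect} MISO estimation on the selected sub-bands, while still charging the phase-\rom{1} cost $N_1E_1$ and keeping the phase-\rom{1}--induced selection encoded in $R_n$.

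Next I would maximize $U$ over $E_1\ge 0$ for fixed $N_1$. Substituting \eqref{eq:Ehnsq}, each $R_n(N_1,E_1)$ is concave and increasing in $E_1$, so $U$ is concave; setting $\partial U/\partial E_1=0$ gives $(\beta E_1+N_0)^2=\eta TP_s\beta^2N_0\,D(N_1)/N_1$, where $D(N_1)\triangleq\sum_{n=1}^{N_2}\bigl(G_n(N_1,M)-M\bigr)$. Substituting back and using $\Gamma$ from \eqref{eq:Gamma} collapses the expression into the closed form
\[
U^\star(N_1)=\eta TP_s\beta N_2M+\eta TP_s\beta\Bigl(\bigl[\sqrt{D(N_1)}-\sqrt{N_1/\Gamma}\,\bigr]^+\Bigr)^2,
\]
which mirrors the Case~1 optimum \eqref{eq:QnetStarCase1} but carries the full (rather than $1/M$-penalized) beamforming gain, exactly as expected for the relaxed bound; the phase-\rom{1} energy is used only when the normalized diversity excess exceeds the training cost, i.e. $\Gamma D(N_1)>N_1$.

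The remaining task is $\max_{N_1\le N}U^\star(N_1)$ as $N\to\infty$. Here I would invoke the logarithmic order-statistic majorant behind Lemma~\ref{lemma:largeNPerfect}, namely $\sum_{n=1}^{N_2}G_n(N_1,M)\le N_2M(1+\ln N_1)$, equivalently $D(N_1)\le N_2M\ln N_1$. In the active regime $\Gamma D(N_1)>N_1$ the bracket is positive and increasing in $D$, so replacing $D$ by this bound yields $U^\star(N_1)\le \eta TP_s\beta N_2M+\eta TP_s\beta\bigl(\sqrt{N_2M\ln N_1}-\sqrt{N_1/\Gamma}\bigr)^2$. Treating $N_1$ as a continuous variable (a valid relaxation once $N\to\infty$ lifts the cap $N_1\le N$, and enlarging the feasible set only increases the maximum), the stationarity condition of the last bracket simplifies to $N_1\ln N_1=\Gamma N_2M$, whose solution is $N_1^\star=e^{w}$ with $w\triangleq W(\Gamma N_2M)$. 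Substituting $N_1^\star$ and using $we^{w}=\Gamma N_2M$ reduces the increment to $\eta TP_s\beta N_2M\,(w-1)^2/w=\eta TN_2P_s\beta M(\sqrt{w}-1/\sqrt{w})^2$, which added to $\eta TP_s\beta N_2M$ gives precisely \eqref{eq:QnetUBProposed}. The main obstacle is establishing the sharp logarithmic majorant $D(N_1)\le N_2M\ln N_1$ with the correct constant from the Gamma order statistics of $\|\mathbf h_n\|^2$; once that is in hand, everything reduces to the routine Lambert-W calculus sketched above.
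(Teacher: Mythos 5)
Your proposal is correct and follows essentially the same route as the paper's proof in Appendix~G: discard the nonnegative phase-\rom{2} loss and training cost, majorize the frequency-diversity gain via $G_n(N_1,M)\le M\,G_1(N_1,1)$ (the paper's Lemma~\ref{lemma:GnUB}), solve the resulting concave problem in $E_1$ in closed form, and then relax $N_1$ to a continuous variable, whose stationarity condition $N_1\ln N_1=\Gamma N_2 M$ yields the Lambert-W expression. The only immaterial differences are that you optimize over $E_1$ exactly before applying the majorant (the paper majorizes first, obtaining the single-fraction problem (P1-UB)), and that you use the exact harmonic-number inequality $G_1(N_1,1)\le 1+\ln N_1$ where the paper invokes the asymptotic approximation $G_1(N_1,1)\approx \ln N_1+1$, which if anything makes your version slightly more rigorous.
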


\begin{IEEEproof}
Please refer to Appendix~\ref{A:largeN}.
\end{IEEEproof}

Lemma~\ref{lemma:largeNProposed} shows that, different from the asymptotic linear scaling with the number of transmit antennas $M$ as in Lemma~\ref{lemma:largeM}, the \emph{net} harvested energy at the ER does not asymptotically increase with  $N$ as $N$ becomes large; instead, it is upper-bounded by a constant value that depends on the two-way ESNR, $\Gamma$. This result is unsurprising, since while the energy cost $(E_1N_1)$ for channel training in phase \rom{1} increases linearly with the number of trained sub-bands,  the frequency-diversity gain scales only in a logarithmic manner with $N_1$, as shown in Lemma~\ref{lemma:largeNPerfect}. Consequently, beyond some large value of $N$, the benefit of channel training for exploiting the frequency-diversity gain cannot compensate the training energy cost; thus, further bandwidth expansion does not provide any performance improvement in terms of the \emph{net} harvested energy at the ER.

\begin{figure}
\centering
\includegraphics[scale=0.25]{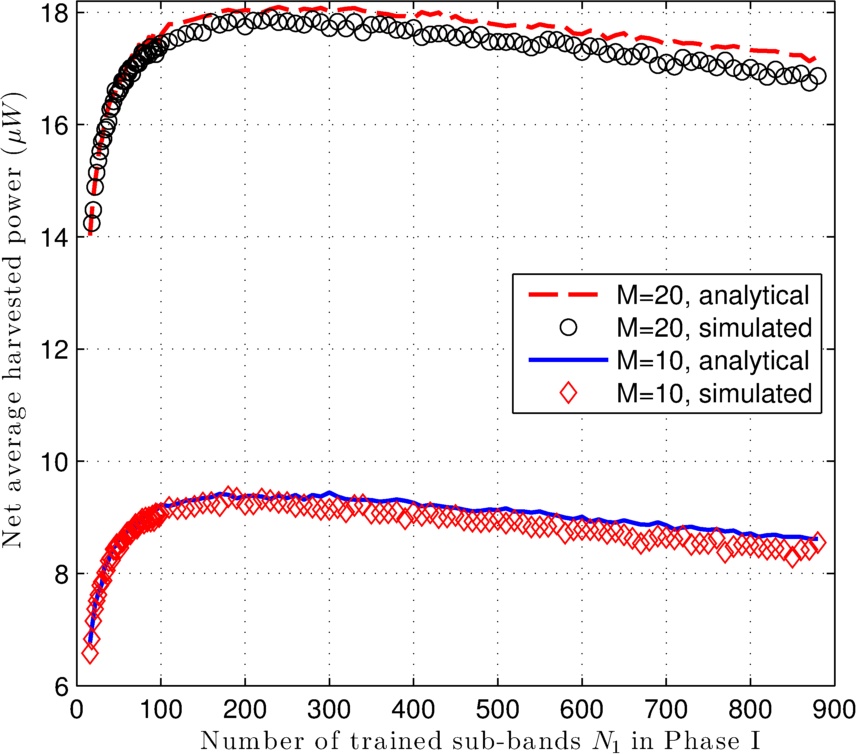} 
\caption{Net average harvested power versus the number of trained sub-bands $N_1$ for $M=10$ and $M=20$.
}\label{F:QnetVsN1}
\end{figure}

\section{Numerical Results}\label{sec:simulation}
In this section, numerical examples are provided to corroborate our study.  To model the frequency-selective channel,  we assume a multi-path power delay profile with exponential distribution  $A(\tau)=\frac{1}{\sigma_{\text{rms}}} e^{-\tau/\sigma_{\text{rms}}}$, $\tau\geq 0$, where $\sigma_{\text{rms}}$  denotes the root-mean-square (rms) delay spread. We set $\sigma_{\text{rms}}=1\mu$s so that the $50\%$ channel coherence bandwidth \cite{523}, i.e., the frequency separation for which the amplitude correlation is $0.5$,  is $B_c=\frac{1}{2\pi \sigma_{\text{rms}}}\approx 160$ kHz. We assume that the system is operated over the 902-928MHz ISM  band, which is equally divided into $N=866$ sub-bands each with bandwidth $B_s=30$kHz.  To comply with the FCC (Federal Communications Commission) regulations over this band \cite{549}, (Part 15.247, paragraph (e): the power spectral density from the intentional radiator ``shall not be greater than $8$dBm in any $3$kHz band''), we set the transmission power for each sub-band as $P_s=60$mW. Furthermore, according to the FCC regulations Part 15.247, paragraph (b) (3), the maximum output power over the 902-928MHz band shall not exceed 1watt, i.e., $P_t\leq 1$watt. Therefore, at each instance, the maximum number of sub-bands that can be activated for energy transmission is $N_2=\lfloor P_t/P_s \rfloor=16$. We  assume that the power spectrum density of the training noise received at the ET is $N_0=-160$dBm/Hz. The average power attenuation between the ET and the ER is assumed to be $60$dB, i.e., $\beta=10^{-6}$, which corresponds to an operating distance about $30$ meters for carrier frequency around $900$MHz. Furthermore, we assume that the  energy conversion efficiency at the ER is $\eta=0.8$.

In Fig.~\ref{F:QnetVsN1}, by varying the number of sub-bands $N_1$ that are trained in phase \rom{1}, the net average harvested power achieved by the optimized two-phase training scheme is plotted for $M=10$ and $M=20$, where the average is taken over $10^4$ random channel realizations. The channel block length is set as $T=0.05$ms.
 The analytical result obtained in Section~\ref{sec:optimalTraining}, i.e.,  $\Qnet^{\star}(N_1)/T$ with $\Qnet^{\star}(N_1)$ denoting the optimal value of problem \eqref{P:E1},  is also shown in Fig.~\ref{F:QnetVsN1}. It is observed that the simulation and analytical results match quite well with each other, which thus validates our theoretical studies. Furthermore, Fig.~\ref{F:QnetVsN1} shows that there is an optimal number of sub-bands to be trained for maximizing the net harvested energy, due to the non-trivial trade-off between achieving more frequency-diversity gain (with lager $N_1$) and reducing the training energy ($E_1N_1$ in phase \rom{1}).

 \begin{figure}
\centering
\includegraphics[scale=0.25]{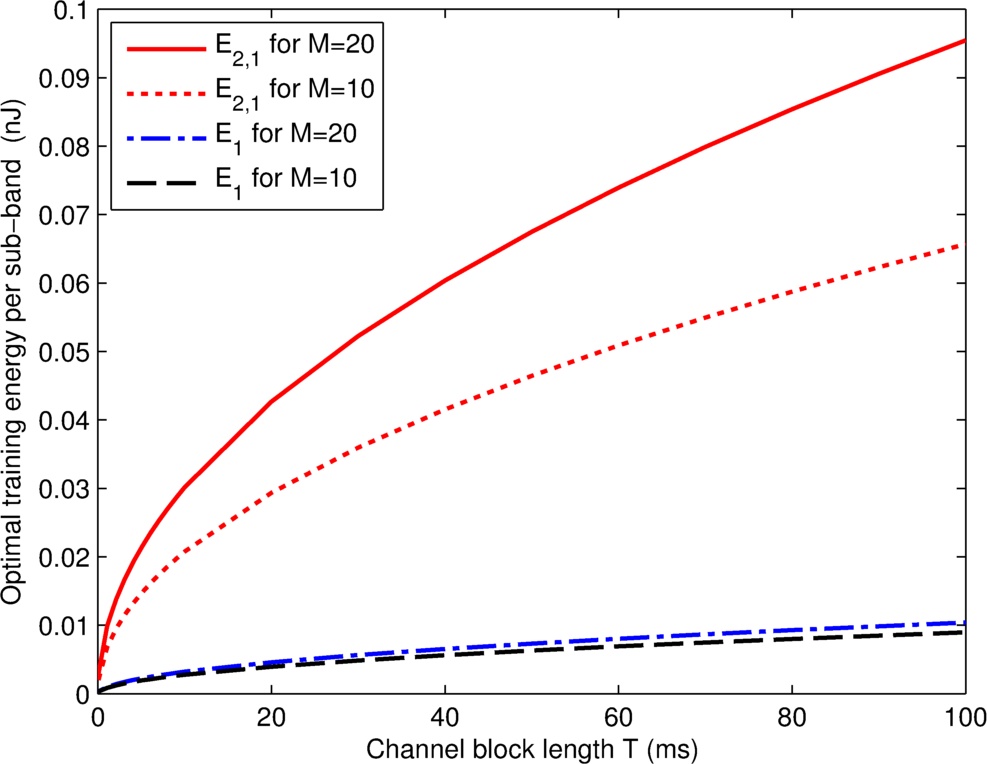}
\caption{Optimal training energy $E_1$ and $E_{2,1}$ versus block length $T$ for $M=10$ and $M=20$.}\label{F:E1E2VsT}
\end{figure}

 In Fig.~\ref{F:E1E2VsT}, the optimal training energy per sub-band $E_1$ in phase \rom{1} and that for the strongest sub-band $E_{2,1}$ in phase \rom{2} are plotted against the channel block length $T$, with $T$ ranging from $0.1$ to $100$ms. It is observed that $E_1$ and $E_{2,1}$ both increase with $T$, which is expected since as the block length increases, more training energy is affordable  due to the increased energy harvested during each block. It is also observed from Fig.~\ref{F:E1E2VsT} that more training energy should be allocated for system with $M=20$ than that with $M=10$, since energy beamforming is more effective and hence training is more beneficial when more antennas are available at the ET.
  Furthermore, for both cases of $M=10$ and $M=20$, $E_{2,1}$ is significantly larger than $E_1$, since in phase \rom{2}, only the $N_2$ selected sub-bands need to be further trained, whereas the training energy in phase \rom{1} needs to be distributed over $N_1\gg N_2$ sub-bands to exploit the frequency diversity.


In Fig.~\ref{F:PnetVsTM20N216}, the net average harvested power based on the proposed two-phase training scheme is plotted against block length $T$, with $T$ ranging from $0.1$ to $20$ms. The number of antennas at the ET is $M=20$. The following benchmark schemes are also considered for comparison.
 \begin{enumerate}
\item {\it Perfect CSI:} 
In this case, ideal energy beamforming as given in \eqref{eq:optxn} can be performed and the average harvested energy is given in \eqref{eq:Qideal1}. This corresponds to the performance upper bound since the frequency-diversity and energy-beamforming gains are maximally  attained without any energy cost for CSI acquisition.  
\item {\it No CSI:} In this case, $N_2$ out of the $N$ total available  sub-bands are randomly  selected and isotropic energy transmission is performed over each of the selected sub-band.  The resulting average harvested energy at the ER can be expressed as
    $\bar{Q}_{\text{noCSI}}=\eta T P_s \beta N_2.$ 
    Neither frequency-diversity nor beamforming gain is achieved in this case.
\item {\it Phase-\rom{1} training only:} This corresponds to the special case of the proposed two-phase training scheme with $E_{2,n}=0$, $\forall n$. Thus, only frequency-diversity gain is exploited.
\item {\it Phase-\rom{2} training only:} This corresponds to the special case of the two-phase training scheme with $E_{1}=0$. Thus, only energy beamforming gain is exploited.
\item {\it Brute-force training:} This refers to the single-phase training scheme where the exact MISO channels at all the $N$ available sub-bands are estimated. Both frequency-diversity and energy-beamforming gains are exploited by this scheme, but it generally requires the most training energy among all the training schemes.
\end{enumerate}

\begin{figure}
\centering
\includegraphics[scale=0.25]{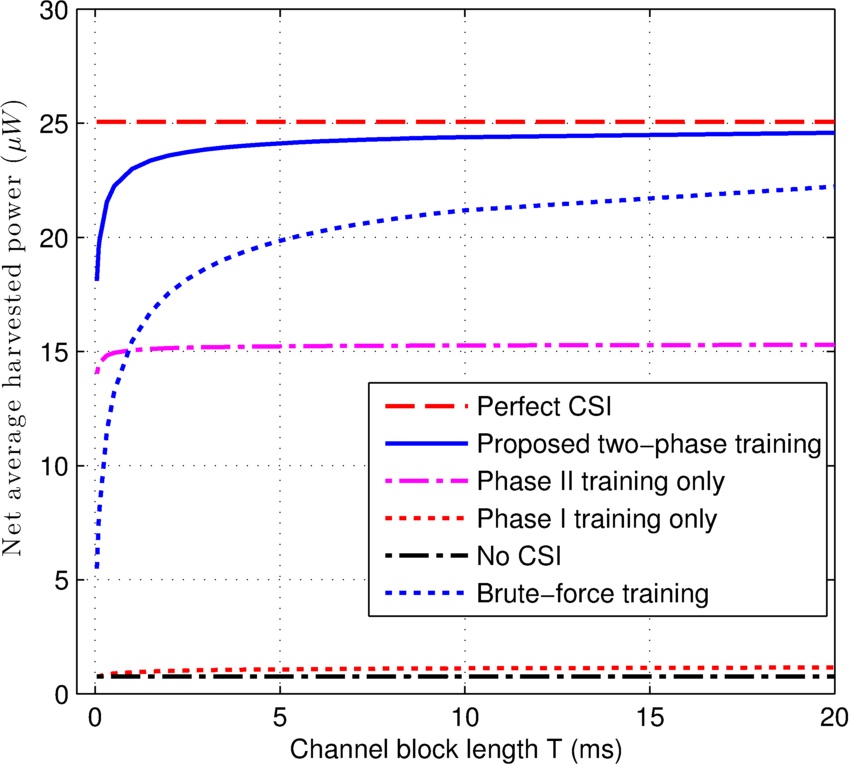}
\caption{Net average harvested power versus block length $T$ for $M=20$.}\label{F:PnetVsTM20N216}
\end{figure}


  It is observed from Fig.~\ref{F:PnetVsTM20N216} that the proposed two-phase training  scheme significantly outperforms the other four benchmark schemes. As the block length $T$ increases, both the two-phase training and the brute-force training schemes approach the performance upper bound with perfect CSI, which is expected since the  energy cost due to channel training becomes less significant with increasing $T$. It can also be inferred from the figure that for multi-antenna  WET in frequency-selective channels, exploiting either frequency-diversity (with phase-\rom{1} training only) or energy beamforming  (with phase-\rom{2} training only) gain alone is far from optimal in general; instead, a good balance between these  two gains as achieved in the proposed two-phase training optimization  is needed.

 \begin{figure}
\centering
\includegraphics[scale=0.25]{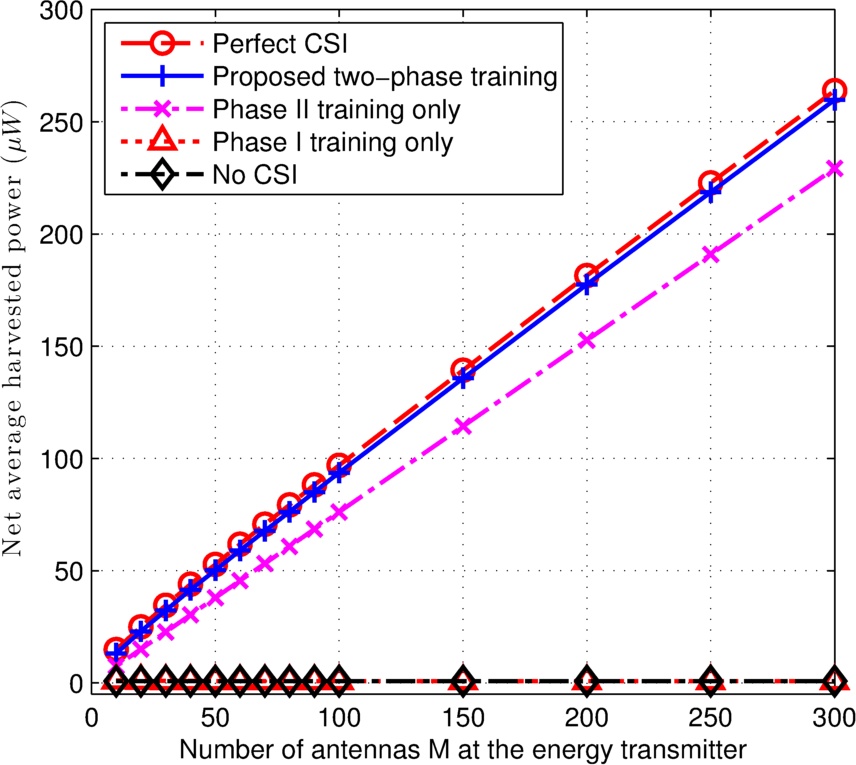}
\caption{Net average harvested power versus number of antennas $M$ at the ET. }\label{F:QnetVsM}
\end{figure}

In Fig.~\ref{F:QnetVsM}, to investigate the massive-array case with large $M$, the net average harvested power is plotted against the number of antennas $M$ at the ET for $T=1$ms. It is observed that both the schemes of  no-CSI and  phase-\rom{1} training only  result in poor performance.  On the other hand, for the two training schemes that exploit energy beamforming gain (two-phase training and phase-\rom{2} training only), the \emph{net} average harvested power scales linearly with $M$, which is in accordance to Lemma~\ref{lemma:largeM}. 
It is also observed from Fig.~\ref{F:QnetVsM} that, under this setup, the optimized two-phase training still provides considerable performance gain over the phase-\rom{2} training only, although  the performance gap becomes inconsequential for extremely large $M$ due to the linear scaling of both schemes, as expected from Lemma~\ref{lemma:largeM}.


In Fig.~\ref{F:QnetVsNSISO}, to investigate the massive-band case with large $N$, the net average harvested power is plotted against the total number of available sub-bands $N$ for the case of SISO WET with $N_2=1$ and $M=1$ (thus, phase-\rom{2} training is no longer needed and the proposed two-phase training is the same as phase-\rom{1} training only), where $N$ varies from $20$ to $2000$ (by assuming that some other spectrum in addition to the 902-928MHz ISM band is available). The channel block length is set as $T=50$ms. It is observed that in the ideal case with perfect CSI at the ET, the average harvested power increases logarithmically with $N$, which is in accordance to Lemma~\ref{lemma:largeNPerfect}. In contrast, for the  proposed training scheme that takes into account the energy cost of channel training, the net harvested power saturates as $N$ becomes large, as expected from Lemma~\ref{lemma:largeNProposed}. Furthermore, Fig.~\ref{F:QnetVsNSISO} shows that in this case, the asymptotic result given in \eqref{eq:QnetUBProposed} provides a satisfactory performance upper bound for the proposed training scheme with large $N$. 


 \begin{figure}
\centering
\includegraphics[scale=0.25]{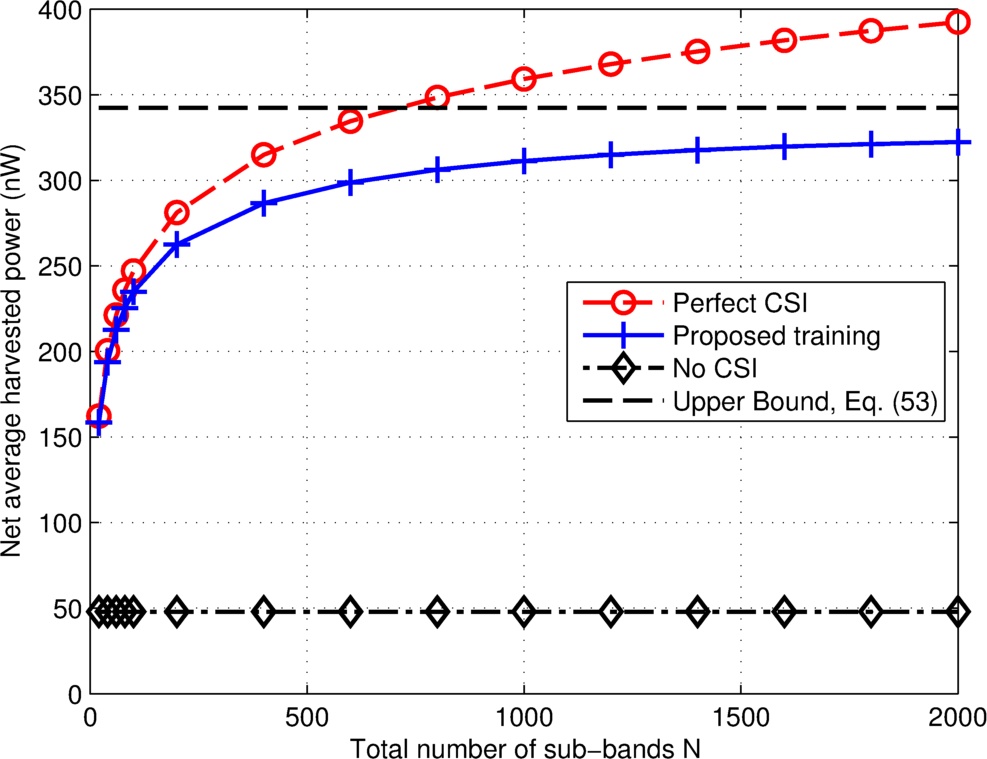}
\caption{Net average harvested power versus total number of sub-bands $N$ for SISO WET.}\label{F:QnetVsNSISO}
\end{figure}

 \section{Conclusions and Future Work}\label{sec:conclusion}
This paper studies the energy-constrained training design for MISO  WET systems in frequency-selective channels. By exploiting channel reciprocity, a two-phase training scheme is proposed. A closed-form expression is derived for the net harvested energy  at the ER, based on which a training design optimization problem is formulated and optimally solved. Asymptotic analysis is also presented for systems with a large number of transmit antennas or independent frequency sub-bands to gain useful insights. Numerical results are provided to validate our analysis and  demonstrate the effectiveness of the proposed scheme which optimally balances the achievable frequency-diversity and energy-beamforming gains in multi-antenna WET with limited-energy training.

There are a number of research directions along which the developed results in this paper can be further investigated, as briefly discussed in the following.
\begin{itemize}
\item {\it MIMO WET:} The optimal training design for the general MIMO wide-band WET systems remains an open problem. MIMO WET for flat-fading channels have been studied in \cite{528}. However, the extension to the more complex frequency-selective channels is no-trivial and deserves further investigation.
\item  {\it Correlated Channel:}  The analysis in this paper is based on the assumption of independent channels in both spatial and frequency domains, which practically holds with sufficient sub-band partition and antenna spacing. For the general setup with correlated channels, both the training optimization and performance analysis become more challenging; however,  spatial  and/or frequency channel correlations can also be exploited to reduce the training overhead and may enhance the overall energy transfer efficiency, which needs further investigation.
\item \emph{Multi-User Setup:} More  research endeavor is needed to find the optimal training design for the general multi-user setups with near-far located ERs. In this case, the so-called ``double near-far'' problem reported in \cite{515,528} needs to be addressed,  where a far-away ER from the ET suffers from higher propagation loss than a close  ER for both reverse-link channel training and forward-link energy transmission.
\item \emph{Energy Outage:} In this paper, the net average harvested power at the ER is maximized by optimizing the training design. In certain scenarios such as the delay-sensitive  applications, a more appropriate design criterion  may be minimizing the energy outage probability of ERs, for which the optimal training and energy transmission design still remain open.
\end{itemize}


\appendices
\section{Order Statistic}\label{A:orderStatistic}
This appendix gives a  brief introduction on order statistic, which is used for the proof of Lemma~\ref{lemma:exphnSq}. A comprehensive treatment on this topic can be found in e.g. \cite{546}.
\begin{definition}
Let $V_1,\cdots, V_{N_1}$ be $N_1$ real-valued random variables. The order statistics $V_{[1]},\cdots, V_{[N_1]}$ are random variables defined by sorting the values of $V_1,\cdots, V_{N_1}$ in non-increasing order, i.e., $V_{[1]}\geq V_{[2]}\geq \cdots \geq V_{[N_1]}$.
\end{definition}

\begin{lemma}
Let $V_1,\cdots, V_{N_1}$ be $N_1$ i.i.d. real-valued random variables with cumulative distribution function (CDF) $F(v)$. The CDF of the $n$th order statistic $V_{[n]}$, denoted as $F_{[n]}(v)$, can be expressed as
\begin{align}
F_{[n]}(v)=\sum_{k=0}^{n-1} \binom{N_1}{k} F(v)^{N_1-k}\left(1-F(v)\right)^k, \ n=1,\cdots, N_1. \label{eq:Fvn}
\end{align}
\end{lemma}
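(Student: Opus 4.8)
The plan is to translate the event $\{V_{[n]}\leq v\}$ into a statement about how many of the $N_1$ i.i.d.\ variables exceed the threshold $v$, and then to recognize that count as binomially distributed. First I would fix $v$ and introduce the counting random variable $K\triangleq \#\{i : V_i > v\}$, the number of samples strictly larger than $v$. Since the $V_i$ are i.i.d.\ with CDF $F$, each sample independently exceeds $v$ with probability $\Pr(V_i > v)=1-F(v)$, so $K$ follows a binomial distribution with parameters $N_1$ and success probability $1-F(v)$; that is, $\Pr(K=k)=\binom{N_1}{k}(1-F(v))^k F(v)^{N_1-k}$.

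The crux of the argument is the combinatorial equivalence $\{V_{[n]}\leq v\}=\{K\leq n-1\}$, which I would establish directly from the non-increasing convention $V_{[1]}\geq\cdots\geq V_{[N_1]}$ fixed in the Definition above. Partitioning the samples into those exceeding $v$ and those not, the $K$ samples larger than $v$ occupy exactly the top $K$ order-statistic positions, while the remaining $N_1-K$ samples (all $\leq v$) occupy the bottom positions. Hence $V_{[K]}>v$ and $V_{[K+1]}\leq v$, so the $n$th largest value satisfies $V_{[n]}\leq v$ if and only if $n>K$, i.e.\ $K\leq n-1$.

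It then remains only to compute $F_{[n]}(v)=\Pr(V_{[n]}\leq v)=\Pr(K\leq n-1)=\sum_{k=0}^{n-1}\Pr(K=k)$ and substitute the binomial PMF from the first step, which yields \eqref{eq:Fvn} immediately.

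I expect the only delicate point to be the bookkeeping around strict versus non-strict inequalities. Because $F(v)=\Pr(V_i\leq v)$ places any sample equal to $v$ into the ``$\leq v$'' group, I must consistently use $\Pr(V_i>v)=1-F(v)$ when defining $K$; done this way, the equivalence $\{V_{[n]}\leq v\}=\{K\leq n-1\}$ holds for an arbitrary distribution, including discrete ones, with no continuity assumption required. Everything beyond this is a routine binomial summation.
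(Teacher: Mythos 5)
Your proof is correct. Note that the paper itself does not prove this lemma at all: it states it in Appendix~A and defers to the standard order-statistics reference \cite{546}, so there is no in-paper argument to compare against. Your binomial-counting derivation is the classical one: with $K\triangleq \#\{i: V_i>v\}\sim \mathrm{Binomial}\bigl(N_1,\,1-F(v)\bigr)$, the equivalence $\{V_{[n]}\leq v\}=\{K\leq n-1\}$ holds because every sample exceeding $v$ is strictly larger than every sample not exceeding $v$, so the $K$ exceedances occupy exactly the top $K$ positions of the non-increasing ordering; summing the binomial PMF then gives \eqref{eq:Fvn}. Your care with strict versus non-strict inequalities is well placed and makes the result valid for arbitrary distributions with no continuity assumption --- a slight gain in generality that the paper never needs, since it only applies the lemma to Erlang-distributed channel powers $V_n=\|\mathbf v_n\|^2$, which are continuous, so ties occur with probability zero anyway. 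In short: correct, self-contained, and it supplies exactly the argument the paper outsources to the literature.
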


\section{A Useful Lemma}\label{A:usefulLemma}
\begin{lemma}\label{lemma:expV}
Let $\mathbf v_1,\cdots, \mathbf v_{N_1}\in \mathbb{C}^{M\times 1}$ be $N_1$ i.i.d. zero-mean CSCG random vectors distributed as $\mathbf v_n\sim \mathcal{CN}(\mathbf 0, \sigma^2_v \mathbf I_M)$, $\forall n$. Further denote by $[\cdot]$ the permutation such that $\|\mathbf v_{[1]}\|^2\geq \|\mathbf v_{[2]}\|^2\geq \cdots \geq \|\mathbf v_{[N_1]}\|^2$. Then we have
\begin{align}
\xE\left[\left\| \mathbf v_{[n]} \right\|^2 \right]=\sigma^2_v G_n(N_1,M), \ n=1,\cdots, N_1,\label{eq:expOrderstat}
\end{align}
where $G_n(N_1,M)$, $n=1,\cdots, N_1$, is a function of $N_1$ and $M$ given by
\begin{align}
G_1(N_1,M)&=\sum_{p=1}^{N_1} \binom{N_1}{p}(-1)^{p+1} c_p, \label{eq:G1} \\
G_{n+1}(N_1,M)&=G_{n}(N_1,M)-\Delta_n, \ n=1,\cdots, N_1-1,\label{eq:Gnplus1}
\end{align}
with
\begin{align}
c_p  =& \sum_{k_0+\cdots k_{M-1}=p}\binom{p}{k_0,\cdots,k_{M-1}} \left( \prod_{m=0}^{M-1}  \frac{1}{\left(m!\right)^{k_m}}\right) \notag \\
& \hspace{3ex} \times \left(\sum_{m=0}^{M-1}mk_m \right)! \frac{1}{p^{1+\sum_{m=0}^{M-1}mk_m}}, \label{eq:an} \\
\Delta_n = & \binom{N_1}{n} \sum_{l=0}^n \sum_{p=0}^{N_1-l} \binom{n}{l}\binom{N_1-l}{p}(-1)^{n-l+p}c_p.
\end{align}
Note that in \eqref{eq:an}, the summation is taken over all sequences of non-negative integer indices $k_0$ to $k_{M-1}$ with the sum equal to $p$, and the coefficients $\dbinom{p}{k_0,\cdots,k_{M-1}}$ are known as multinomial coefficients, which can be computed as
\begin{align}
\binom{p}{k_0,\cdots,k_{M-1}}=\frac{p!}{k_0!\cdots k_{M-1}!}. \notag
\end{align}
Furthermore, for $n=1$ and $M=1$, $G_n(N_1,M)$ can be expressed as the $N_1$th partial sum of the harmonic series, i.e.,
\begin{align}
G_1(N_1,1)=\sum_{i=1}^{N_1} \frac{1}{i}.\label{eq:G1M1}
\end{align}
\end{lemma}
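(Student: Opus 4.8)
The plan is to reduce the claim to a statement about the order statistics of i.i.d.\ Gamma random variables and then to evaluate the resulting integrals in closed form. First I would normalize: since $\mathbf v_n\sim\mathcal{CN}(\mathbf 0,\sigma_v^2\mathbf I_M)$, the variable $X_n\triangleq\|\mathbf v_n\|^2/\sigma_v^2=\sum_{m=1}^M|v_{n,m}|^2/\sigma_v^2$ is a sum of $M$ i.i.d.\ unit-mean exponentials, hence Gamma (Erlang) distributed with integer shape $M$ and unit scale, with density $f(x)=x^{M-1}e^{-x}/(M-1)!$ and CDF $F(x)=1-S(x)$, where $S(x)\triangleq e^{-x}\sum_{m=0}^{M-1}x^m/m!$ is the complementary CDF for $x\ge 0$. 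Because rescaling $\mathbf v_n$ by $\sigma_v$ rescales $\|\mathbf v_n\|^2$ by $\sigma_v^2$, the ordering is preserved and $\xE[\|\mathbf v_{[n]}\|^2]=\sigma_v^2\,\xE[X_{[n]}]$, so it suffices to identify $G_n(N_1,M)$ with $\xE[X_{[n]}]$ and to verify the stated formulas.

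Next I would exploit nonnegativity of $X_{[n]}$ to write $\xE[X_{[n]}]=\int_0^\infty\bigl(1-F_{[n]}(x)\bigr)\,dx$, with the order-statistic CDF $F_{[n]}$ taken from \eqref{eq:Fvn} in Appendix~\ref{A:orderStatistic}. For $n=1$ this reduces to $F_{[1]}(x)=F(x)^{N_1}$; expanding $1-(1-S)^{N_1}$ binomially and integrating term by term yields $G_1(N_1,M)=\sum_{p=1}^{N_1}\binom{N_1}{p}(-1)^{p+1}c_p$ with $c_p\triangleq\int_0^\infty S(x)^p\,dx$, matching \eqref{eq:G1}. The heart of the computation is then $c_p$: I would write $S(x)^p=e^{-px}\bigl(\sum_{m=0}^{M-1}x^m/m!\bigr)^p$, expand the $p$-th power by the multinomial theorem over indices $k_0+\cdots+k_{M-1}=p$, collect the power $x^{q}$ with $q=\sum_{m}mk_m$, and apply the Gamma integral $\int_0^\infty x^{q}e^{-px}\,dx=q!/p^{q+1}$. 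This reproduces the multinomial expression \eqref{eq:an} exactly.

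For the recursion I would use that consecutive order-statistic CDFs telescope: from \eqref{eq:Fvn}, $F_{[n+1]}(x)-F_{[n]}(x)=\binom{N_1}{n}F(x)^{N_1-n}S(x)^n$, so $\Delta_n\triangleq G_n-G_{n+1}=\binom{N_1}{n}\int_0^\infty F^{N_1-n}S^{\,n}\,dx$. Writing $S^{\,n}=(1-F)^n$ and then $F^{N_1-l}=(1-S)^{N_1-l}$, each expanded binomially, leads after integration to the double sum in \eqref{eq:Gnplus1}. One point that needs care here is convergence, since the individual $\int_0^\infty F^{N_1-l}\,dx$ diverge (equivalently, the term $c_0=\int_0^\infty dx$ is infinite): I would note that the total coefficient of $c_0$ is $\sum_{l=0}^n\binom{n}{l}(-1)^{n-l}=(1-1)^n=0$ for every $n\ge 1$, so the divergent contributions cancel and the manipulation is legitimate. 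Finally, the special case $M=1$ follows immediately, since then $S(x)=e^{-x}$ forces $c_p=1/p$, whence $G_1(N_1,1)=\sum_{p=1}^{N_1}\binom{N_1}{p}(-1)^{p+1}/p=\sum_{i=1}^{N_1}1/i$ by the standard harmonic-number identity, giving \eqref{eq:G1M1}.

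The routine ingredients (the survival-function integral representation of the mean, the Gamma integral, and the order-statistic CDF) are standard; I expect the main obstacle to be purely combinatorial bookkeeping: carrying out the multinomial expansion of $S(x)^p$ cleanly so that the exponent $\sum_m mk_m$ and the factor $\prod_m(m!)^{-k_m}$ land in precisely the form \eqref{eq:an}, and verifying that the formal binomial expansions used for $\Delta_n$ are valid despite the divergence of the $c_0$ term. Once those are handled, the two pieces $G_1$ and $\Delta_n$ assemble into the full recursive description of $G_n(N_1,M)$.
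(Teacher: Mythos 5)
Your proposal is correct and follows essentially the same route as the paper's proof in Appendix~\ref{A:usefulLemma}: reduce to order statistics of i.i.d.\ Erlang random variables, compute the mean via $\int_0^\infty\bigl(1-F_{[n]}(v)\bigr)\,dv$, obtain $G_1$ by binomial expansion combined with the multinomial theorem and the Gamma integral $\int_0^\infty x^q e^{-\mu x}\,dx=q!\,\mu^{-q-1}$, and obtain $\Delta_n$ from the telescoping identity $F_{[n+1]}-F_{[n]}=\binom{N_1}{n}F^{N_1-n}(1-F)^n$. The only differences are cosmetic (you normalize to unit scale up front, whereas the paper carries $\lambda=1/\sigma_v^2$ and extracts $\sigma_v^2$ via $a_p=\sigma_v^2 c_p$), and your explicit check that the divergent $c_0$ contributions cancel because $\sum_{l=0}^n\binom{n}{l}(-1)^{n-l}=0$ for $n\geq 1$ is a welcome point of rigor that the paper leaves implicit.
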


\begin{proof}
Define the random variables $V_n\triangleq\|\mathbf v_n\|^2$, $n=1,\cdots, N_1$. It then follows that $V_1,\cdots, V_{N_1}$ are i.i.d. Erlang distributed with shape parameter $M$ and rate $\lambda=1/\sigma^2_v$, whose CDF is given by
\begin{align}
F(v)=1-\sum_{m=0}^{M-1} \frac{1}{m!} e^{-\lambda v}(\lambda v)^m, \ v\geq 0. \label{eq:FvErlang}
\end{align}
The CDF $F_{[n]}(v)$ of the order statistics $V_{[n]}\triangleq \|\mathbf v_{[n]}\|^2$ is then obtained by substituting $F(v)$ in \eqref{eq:FvErlang} into \eqref{eq:Fvn}. The expectations can then be expressed as
\begin{align}
\xE \left[ V_{[n]}\right]= \int_{0}^\infty \left(1-F_{[n]}(v)\right)dv, \ n =1,\cdots, N_1. \label{eq:Evn}
\end{align}
We first show \eqref{eq:expOrderstat} for $n=1$, i.e., the expectation of the maximum of $N_1$ i.i.d. Erlang random variables.  By substituting $n=1$ into \eqref{eq:Fvn}, we have $F_{[1]}(v)=F(v)^{N_1}$. With binomial expansion, we then have
\begin{align}
\xE\left[V_{[1]}\right]&=\int_{0}^\infty \left(1-\left(1-\sum_{m=0}^{M-1} \frac{1}{m!} e^{-\lambda v}(\lambda v)^m\right)^{N_1}\right)dv \notag \\
&=\sum_{p=1}^{N_1} \binom{N_1}{p} (-1)^{p+1} a_p,\label{eq:Ev}
\end{align}
where
\begin{align}
a_p  = & \int_0^{\infty} e^{-\lambda pv}\left(\sum_{m=0}^{M-1} \frac{1}{m!}(\lambda v)^m  \right)^p dv \notag \\
= &\sum_{k_0+\cdots k_{M-1}=p}\binom{p}{k_0,\cdots, k_{M-1}}
\left( \prod_{m=0}^{M-1} \frac{1}{\left(m!\right)^{k_m}}\right) \notag \\
& \hspace{3ex} \times \lambda^{\sum_{m=0}^{M-1}mk_m } \int_0^\infty e^{-\lambda pv} v^{\sum_{m=0}^{M-1} mk_m}dv \label{eq:multinom}\\
=& \frac{1}{\lambda} c_p=\sigma_v^2 c_p, \label{eq:intIdentity}
\end{align}
where \eqref{eq:multinom} follows from the multinomial  expansion theorem, and  \eqref{eq:intIdentity} follows from the integral identity $\int_0^\infty x^n e^{-\mu x}dx=n! \mu^{-n-1}$ \cite{524}. 
The result in \eqref{eq:expOrderstat} for $n=1$ then follows by substituting \eqref{eq:intIdentity} into \eqref{eq:Ev}.

To show \eqref{eq:expOrderstat} for $n\geq 2$, we apply the following recursive relationship obtained from \eqref{eq:Fvn}:
\begin{align}
F_{[n+1]}(v) = F_{[n]}(v) + &  \binom{N_1}{n} F(v)^{N_1-n} \left(1-F(v)\right)^n, \notag \\
 & n=1,\cdots, N_1-1. \label{eq:Fvnplus1}
\end{align}
By substituting \eqref{eq:Fvnplus1} into \eqref{eq:Evn}, we get
\begin{align}
\xE\left[V_{[n+1]}\right]=\xE\left[V_{[n]} \right]- \int_0^\infty  \binom{N_1}{n} F(v)^{N_1-n} \left(1-F(v)\right)^n dv. \label{eq:Evnplus1}
\end{align}
By evaluating the integration in \eqref{eq:Evnplus1}, the result in \eqref{eq:expOrderstat} for $n\geq 2$ can be obtained.

Furthermore, by substituting $M=1$ into \eqref{eq:G1} and with some manipulations, the expression for $G_1(N_1,1)$ as given in \eqref{eq:G1M1} can be obtained.  The property of $G_n(N_1,M)$ given in \eqref{eq:Gorder} can be verified as well. 

This completes the proof of Lemma~\ref{lemma:expV}.
\end{proof}

\section{Proof of Lemma~\ref{lemma:exphnSq}}\label{A:exphnSq}
Note that in the absence of  phase-\rom{1} training ($E_1=0$), the distribution of $\mathbf h_{[n]}$ simply follows from \eqref{eq:hn}, which thus leads to $\xE\left [\left \| \mathbf h_{[n]} \right\|^2 \right]=\beta M$, $\forall n$. In this case, no frequency-diversity gain is achieved.  For the general scenario with $E_1>0$, $[n]$ is the $n$th strongest sub-band out of the $N_1$ trained sub-bands as determined via \eqref{eq:order}. As a consequence, the  statistics of the channel $\mathbf h_{[n]}$ are affected by phase-\rom{1} training via \eqref{eq:ynI} and \eqref{eq:order}. To exploit such a relationship, we first show the following lemma.
\begin{lemma}\label{lemma:statEqv}
The input-output relationship in \eqref{eq:ynI} is statistically equivalent to
\begin{align}\label{eq:hnEqv}
\mathbf h_n = \frac{\beta \sqrt{E_1}}{\beta E_1 + N_0} \mathbf y_n^{\text{\rom{1}}}+ \sqrt{\frac{\beta N_0}{\beta E_1 + N_0}}\mathbf t_n, \ n\in \mathcal{N}_1,
\end{align}
where $\mathbf t_n\sim \mathcal{CN}(\mathbf 0, \mathbf I_M)$ is a CSCG random vector independent of $\mathbf y_n^{\text{\rom{1}}}$, i.e.,
$\xE\left [\mathbf  y_n^{\text{\rom{1}}} \mathbf t_n^H\right]=\mathbf 0, \  n\in \mathcal{N}_1. $
\end{lemma}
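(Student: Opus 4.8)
The plan is to recognize Lemma~\ref{lemma:statEqv} as the standard Bayesian (conditional) Gaussian decomposition of $\mathbf h_n$ given the noisy observation $\mathbf y_n^{\text{\rom{1}}}$, and to prove the claimed equivalence by matching second-order statistics. The starting observation is that in the original model \eqref{eq:ynI}, the pair $(\mathbf h_n, \mathbf y_n^{\text{\rom{1}}})$ is a linear transformation of the two independent CSCG vectors $\mathbf h_n$ and $\mathbf z_n^{\text{\rom{1}}}$, hence jointly zero-mean CSCG. Likewise, in the proposed model \eqref{eq:hnEqv}, the pair is a linear transformation of the independent CSCG vectors $\mathbf y_n^{\text{\rom{1}}}$ and $\mathbf t_n$, so it is again jointly zero-mean CSCG. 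Since a zero-mean jointly CSCG distribution is completely determined by its covariance matrix, it suffices to show that the two models induce the same autocovariances $\xE[\mathbf h_n \mathbf h_n^H]$ and $\xE[\mathbf y_n^{\text{\rom{1}}}(\mathbf y_n^{\text{\rom{1}}})^H]$ together with the same cross-covariance $\xE[\mathbf h_n (\mathbf y_n^{\text{\rom{1}}})^H]$.

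First I would compute the three covariances from the original model \eqref{eq:ynI}. Using $\mathbf h_n\sim\mathcal{CN}(\mathbf 0,\beta\mathbf I_M)$ and $\mathbf z_n^{\text{\rom{1}}}\sim\mathcal{CN}(\mathbf 0,N_0\mathbf I_M)$ independent, one obtains $\xE[\mathbf h_n\mathbf h_n^H]=\beta\mathbf I_M$, $\xE[\mathbf y_n^{\text{\rom{1}}}(\mathbf y_n^{\text{\rom{1}}})^H]=(\beta E_1 + N_0)\mathbf I_M$, and $\xE[\mathbf h_n(\mathbf y_n^{\text{\rom{1}}})^H]=\sqrt{E_1}\,\beta\mathbf I_M$. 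Next I would verify that \eqref{eq:hnEqv} reproduces exactly these quantities, imposing $\mathbf y_n^{\text{\rom{1}}}\sim\mathcal{CN}(\mathbf 0,(\beta E_1+N_0)\mathbf I_M)$ and $\mathbf t_n\sim\mathcal{CN}(\mathbf 0,\mathbf I_M)$ with $\xE[\mathbf y_n^{\text{\rom{1}}}\mathbf t_n^H]=\mathbf 0$. Because $\mathbf t_n$ is orthogonal to $\mathbf y_n^{\text{\rom{1}}}$, substituting \eqref{eq:hnEqv} yields the cross-covariance $\tfrac{\beta\sqrt{E_1}}{\beta E_1+N_0}(\beta E_1+N_0)\mathbf I_M=\sqrt{E_1}\,\beta\mathbf I_M$ and the autocovariance $\big(\tfrac{\beta^2 E_1}{\beta E_1+N_0}+\tfrac{\beta N_0}{\beta E_1+N_0}\big)\mathbf I_M=\beta\mathbf I_M$, matching the original model. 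The coefficients in \eqref{eq:hnEqv} are in fact nothing but the LMMSE gain $\xE[\mathbf h_n(\mathbf y_n^{\text{\rom{1}}})^H]\big(\xE[\mathbf y_n^{\text{\rom{1}}}(\mathbf y_n^{\text{\rom{1}}})^H]\big)^{-1}$ and the square root of the resulting error covariance $\tfrac{\beta N_0}{\beta E_1+N_0}\mathbf I_M$, which is how one would derive them were they not already given.

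The only genuinely delicate point is the direction of the equivalence and the role of the independence claim $\xE[\mathbf y_n^{\text{\rom{1}}}\mathbf t_n^H]=\mathbf 0$. I must be careful that in \eqref{eq:hnEqv} the primitives are $(\mathbf y_n^{\text{\rom{1}}},\mathbf t_n)$ whereas in \eqref{eq:ynI} they are $(\mathbf h_n,\mathbf z_n^{\text{\rom{1}}})$; the assertion concerns only the induced joint law of $(\mathbf h_n,\mathbf y_n^{\text{\rom{1}}})$, not the individual noise vectors. Independence of $\mathbf t_n$ and $\mathbf y_n^{\text{\rom{1}}}$ is essential: for jointly CSCG vectors, being uncorrelated is equivalent to being independent, so once I verify $\xE[\mathbf y_n^{\text{\rom{1}}}\mathbf t_n^H]=\mathbf 0$ --- either by construction, defining $\mathbf t_n$ as the normalized innovation $\mathbf h_n-\xE[\mathbf h_n\mid\mathbf y_n^{\text{\rom{1}}}]$ and invoking the orthogonality principle, or by direct computation --- the full independence required in the statement follows automatically. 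With the second-order statistics matched and joint Gaussianity established on both sides, the two models are statistically indistinguishable, which completes the proof.
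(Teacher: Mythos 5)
Your proof is correct and follows essentially the same route as the paper's: both arguments note that $(\mathbf h_n, \mathbf y_n^{\text{\rom{1}}})$ is jointly zero-mean CSCG under either model and then verify that the decomposition \eqref{eq:hnEqv} reproduces the marginal distribution of $\mathbf h_n$ in \eqref{eq:hn} and the cross-correlation $\xE[\mathbf y_n^{\text{\rom{1}}}\mathbf h_n^H]=\beta\sqrt{E_1}\,\mathbf I_M$, which suffices since a zero-mean jointly Gaussian law is determined by its second-order statistics. Your additional remarks --- identifying the coefficients as the LMMSE gain and error-covariance square root, and observing that uncorrelatedness of $\mathbf t_n$ and $\mathbf y_n^{\text{\rom{1}}}$ upgrades to independence under joint Gaussianity --- are consistent with, and slightly more explicit than, the paper's brief verification.
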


\begin{IEEEproof}
It follows from \eqref{eq:hn} and \eqref{eq:ynI} that $\mathbf y_n^{\text{\rom{1}}}$ is a CSCG random vector distributed as
\begin{align}\label{eq:ynIStat}
\mathbf y_n^{\text{\rom{1}}} \sim \mathcal{CN}\left(\mathbf 0, (\beta E_1+N_0) \mathbf I_M\right), \ \forall n.
\end{align}
Furthermore, the cross-correlation between $\mathbf h_n$ and $\mathbf y_n^{\text{\rom{1}}}$ is
\begin{align}\label{eq:crossCorr}
\xE\left[\mathbf y_n^{\text{\rom{1}}} \mathbf h_n^H \right]=\beta \sqrt{E_1} \mathbf I_M.
\end{align}

Since both $\mathbf h_n$ and $\mathbf y_n$ are zero-mean CSCG random vectors, to prove Lemma~\ref{lemma:statEqv}, it is sufficient to show that the random vector $\mathbf h_n$ obtained by \eqref{eq:hnEqv} has the same distribution as \eqref{eq:hn}, and also has the same cross-correlation with $\mathbf y_n^{\text{\rom{1}}}$ as \eqref{eq:crossCorr}. Such results can be easily verified from \eqref{eq:hnEqv} and \eqref{eq:ynIStat}.
\end{IEEEproof}

By applying Lemma~\ref{lemma:statEqv} for the $N_2$ selected sub-bands as determined in \eqref{eq:order}, we obtain the following result,
\begin{align}
\xE\left[ \left\| \mathbf h_{[n]} \right\|^2 \right]&=\frac{\beta^2 E_1}{(\beta E_1 + N_0)^2}\xE\left[ \left \| \mathbf y_{[n]}^{\text{\rom{1}}} \right\|^2\right]+\frac{\beta N_0 M}{\beta E_1 + N_0}\\
&=\frac{\beta^2 E_1 G_n(N_1,M)+\beta N_0 M}{\beta E_1 +N_0}, \ n=1,\cdots, N_2, \label{eq:Ehnsq2}
\end{align}
where \eqref{eq:Ehnsq2} follows from Lemma~\ref{lemma:expV} in Appendix~\ref{A:usefulLemma} and \eqref{eq:ynIStat}.

This completes the proof of Lemma~\ref{lemma:exphnSq}.

\section{Proof of Lemma~\ref{lemma:LMMSE}}\label{A:LMMSE}
Since both $\mathbf h_{[n]}$ and $\mathbf y_{[n]}^{\text{\rom{2}}}$ are zero-mean random vectors with i.i.d. entries, the LMMSE estimator can be expressed as $\hat {\mathbf h}_{[n]}=b_n\mathbf y_{[n]}^{\text{\rom{2}}}$, with $b_n$ a complex-valued parameter to be determined.  
The corresponding MSE can be expressed as
\begin{align}
e_n=& \xE\left[ \left\| \tilde{\mathbf h}_{[n]}\right\|^2\right] =\xE\left[ \left\| \left(1-b_n\sqrt{E_{2,n}} \right) \mathbf h_{[n]} -b_n \mathbf z_{[n]}^{\text{\rom{2}}} \right\|^2\right] \notag \\
=& \left|1-b_n\sqrt{E_{2,n}} \right|^2 R_n(N_1,E_1)  + |b_n|^2 N_0 M \notag \\
=& |b_n|^2 \left( E_{2,n} R_n(N_1,E_1) + N_0 M\right)-(b_n+b_n^*)\sqrt{E_{2,n}} R_n(N_1,E_1)\notag \\
&\hspace{2ex} +R_n(N_1,E_1). \notag
\end{align}
By setting the derivative of $e_n$ with respect to $b_n^*$ equal to zero, the optimal coefficient $b_n$  can be obtained as
\begin{align}
b_n=\frac{\sqrt{E_{2,n}}R_n(N_1,E_1)}{E_{2,n}R_n(N_1,E_1)+N_0M}, \ n=1,\cdots, N_2. \notag
\end{align}
The resulting MMSE can be obtained accordingly.
Furthermore, the following result can be obtained,
\begin{align}
\xE\left[\|\hat {\mathbf h}_{[n]}\|^2 \right]=|b_n|^2 \xE\left[ \|\mathbf y_{[n]}^{\text{\rom{2}}}\|^2\right] & =\frac{E_{2,n} R_n^2(N_1,E_1)}{E_{2,n}R_n(N_1,E_1)+N_0M},\notag \\ & n=1,\cdots, N_2.
\end{align}

To show that $\xE\left[ \tilde {\mathbf h}_{[n]}^H\hat {\mathbf h}_{[n]} \right]=0$, we use the following result,
\begin{align}
\xE\left[ {\mathbf h}_{[n]}^H \hat {\mathbf h}_{[n]}\right]=b \xE\left[ {\mathbf h}_{[n]}^H \mathbf y_{n^{\star}}^{\text{\rom{2}}}\right]& =\frac{E_2 R_h^2(N_1,E_1)}{E_2R_h(N_1,E_1)+N_0M}\notag \\
& =\xE\left[\|\hat {\mathbf h}_{[n]}\|^2 \right].
\end{align}
Therefore, we have
\begin{align}
\xE\left[ \tilde {\mathbf h}_{[n]}^H\hat {\mathbf h}_{[n]} \right]=
\xE\left[ {\mathbf h}_{[n]}^H \hat {\mathbf h}_{[n]}\right]-\xE\left[\|\hat {\mathbf h}_{[n]}\|^2 \right]=0,\ \forall n,
\end{align}
where we have used the identity $\tilde {\mathbf h}_{[n]}={\mathbf h}_{[n]}-\hat {\mathbf h}_{[n]}$.

This completes the proof of Lemma~\ref{lemma:LMMSE}.

\section{Proof of Lemma~\ref{lemma:largeM}}\label{A:largeM}
By applying \eqref{eq:GnMassive} to \eqref{eq:Ehnsq}, we have
$R_n^{\text{large-}M}(N_1,E_1)\rightarrow  \beta M, \ \forall n=1,\cdots, N_2$.
As a result, the average harvested energy at the ER in \eqref{eq:barQ} reduces to
\begin{align}
\bar{Q}^{\text{large-}M}(\{E_{2,n}\}) \rightarrow \eta T P_s \beta \sum_{n=1}^{N_2}  \frac{\beta M E_{2,n}+N_0}{\beta E_{2,n}+N_0},
\end{align}
which is independent of $E_1$ and $N_1$. Therefore, we should have $E_1^{\text{large-}M} \rightarrow 0$, i.e., the phase-\rom{1} training for exploiting the frequency-diversity gain is no longer required.  In this case, $N_2$ out of the $N$ total available sub-bands can be arbitrarily selected  for phase-\rom{2} training in order to apply energy beamforming. Based on \eqref{eq:E2star}, the optimal training energy for each of the $N_2$ sub-bands reduces to
\begin{align}
E_{2,n}^{\text{large-}M} & \rightarrow \left[\sqrt{\eta T P_s(M-1)N_0} -\frac{N_0}{\beta}\right]^+ \notag \\
&\rightarrow \sqrt{\eta T P_sN_0 M}, \ n=1,\cdots, N_2, \text{ as } M\rightarrow \infty. \notag
\end{align}
Furthermore, based on \eqref{eq:QnetN1E1Case2}, the corresponding net harvested energy reduces to
\begin{align}
\Qnet^{\text{large-}M}&\rightarrow N_2\left(\eta T P_s \beta M -2 \sqrt{\eta T P_s N_0 (M-1)}+\frac{N_0}{\beta} \right) \notag \\ 
&\rightarrow \eta T N_2 P_s \beta M, \ \text{ as } M\rightarrow \infty. \notag
\end{align}

This completes the proof of Lemma~\ref{lemma:largeM}.

\section{Proof of Lemma~\ref{lemma:largeNPerfect}}\label{A:largeNPerfect}
To prove Lemma~\ref{lemma:largeNPerfect}, we first show the following result.
\begin{lemma}\label{lemma:GnUB}
For the function $G_n(N_1,M)$ defined in Lemma~\ref{lemma:expV} in Appendix~\ref{A:usefulLemma}, the following inequality holds:
\begin{align}
G_n(N_1,M)\leq M G_1(N_1,1), \forall n=1,\cdots, N_2.
\end{align}
\end{lemma}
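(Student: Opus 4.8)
The plan is to give a short probabilistic proof that exploits the interpretation of $G_n(N_1,M)$ supplied by Lemma~\ref{lemma:expV}. Recall that $\sigma_v^2 G_n(N_1,M)=\xE[\|\mathbf v_{[n]}\|^2]$, where $\mathbf v_1,\dots,\mathbf v_{N_1}$ are i.i.d. $\mathcal{CN}(\mathbf 0,\sigma_v^2\mathbf I_M)$ and $[\cdot]$ orders them by decreasing squared norm. Since the ordering established in \eqref{eq:Gorder} already gives $G_n(N_1,M)\le G_1(N_1,M)$ for every $n$, it suffices to prove the single inequality $G_1(N_1,M)\le M\,G_1(N_1,1)$, i.e., a bound on the expected \emph{maximum} squared norm among the $N_1$ vectors.

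First I would decompose each squared norm into its per-antenna contributions. Writing $\mathbf v_i=(v_{i,1},\dots,v_{i,M})^T$, we have $\|\mathbf v_i\|^2=\sum_{m=1}^M |v_{i,m}|^2$, where the scalars $X_{i,m}\triangleq|v_{i,m}|^2$ are i.i.d.\ exponential random variables with mean $\sigma_v^2$ (rate $\lambda=1/\sigma_v^2$). The key step is the elementary observation that the maximum of sums is dominated by the sum of the per-coordinate maxima: if $i^\star=\arg\max_i\sum_m X_{i,m}$, then $X_{i^\star,m}\le\max_i X_{i,m}$ for every $m$, and hence
\begin{align}
\max_{1\le i\le N_1}\|\mathbf v_i\|^2=\sum_{m=1}^M X_{i^\star,m}\le \sum_{m=1}^M \max_{1\le i\le N_1} X_{i,m}. \notag
\end{align}
Taking expectations and using linearity yields $\sigma_v^2 G_1(N_1,M)\le \sum_{m=1}^M \xE\big[\max_i X_{i,m}\big]$.

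It then remains to evaluate the expected maximum of $N_1$ i.i.d.\ exponential variables. For each fixed $m$, a standard computation, e.g.\ $\xE[\max_i X_{i,m}]=\int_0^\infty\!\big(1-(1-e^{-\lambda x})^{N_1}\big)\,dx$, which telescopes to $\tfrac{1}{\lambda}\sum_{i=1}^{N_1}\tfrac1i$, gives $\xE[\max_i X_{i,m}]=\sigma_v^2\sum_{i=1}^{N_1}\tfrac1i=\sigma_v^2 G_1(N_1,1)$, where the final identity is \eqref{eq:G1M1}. Substituting back and cancelling $\sigma_v^2$ gives $G_1(N_1,M)\le M\,G_1(N_1,1)$, and invoking \eqref{eq:Gorder} extends the bound to all $n=1,\dots,N_2$.

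The argument is mostly routine; the only mild subtlety is recognizing that the ``max-of-sums $\le$ sum-of-maxes'' inequality is precisely what converts the intractable expected maximum of Erlang vectors into $M$ copies of the elementary expected maximum of scalar exponentials. The remaining exponential-maximum identity is classical and requires no new ideas, so I expect the write-up to be brief.
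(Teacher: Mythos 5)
Your proof is correct and takes essentially the same route as the paper's: reduce to $n=1$ via the monotonicity in \eqref{eq:Gorder}, apply the max-of-sums $\leq$ sum-of-maxes inequality to the per-antenna decomposition $\|\mathbf v_i\|^2=\sum_m |v_{i,m}|^2$, and identify each scalar expected maximum with $\sigma_v^2 G_1(N_1,1)$. The only cosmetic difference is that you explicitly compute the expected maximum of $N_1$ i.i.d.\ exponentials via the integral and \eqref{eq:G1M1}, whereas the paper obtains it directly from the definition of $G_1(N_1,1)$ applied to the i.i.d.\ unit-variance scalar entries.
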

\begin{IEEEproof}
Since $G_n(N_1,M)$ monotonically decreases with $n$ as given in \eqref{eq:Gorder}, to prove Lemma~\ref{lemma:GnUB}, it is sufficient to show that $G_1(N_1,M)\leq M G_1(N_1,1)$. To this end, we define $N_1$ i.i.d. zero-mean CSCG random vectors $\mathbf u_i\sim \mathcal{CN}(\mathbf 0, \mathbf I_M)$, $i=1,\cdots, N_1$. Then based on the definition of $G_n(N_1,M)$ given in \eqref{eq:expOrderstat}, we have
\begin{align}
G_1(N_1,M)=&\xE\left[ \underset{i=1,\cdots N_1}{\max} \|\mathbf u_i\|^2\right]
= \xE\left[ \underset{i=1,\cdots N_1}{\max}  \sum_{j=1}^M |u_{ij}|^2\right] \notag \\
\leq & \xE\left[  \sum_{j=1}^M \underset{i=1,\cdots N_1}{\max}  |u_{ij}|^2\right]
=\sum_{j=1}^M \xE \left[ \underset{i=1,\cdots N_1}{\max}  |u_{ij}|^2 \right]\notag \\
=& M G_1(N_1,1),\notag
\end{align}
where $u_{ij}\sim \mathcal {CN}(0,1)$ denotes the $j$th element of $\mathbf u_i$. Note that the last equality follows from the definition of $G_1(N,1)$ and the fact that for any fixed $j$, $\{u_{ij}\}_{i=1}^{N_1}$ are i.i.d. zero-mean unit-variance CSCG random variables.
\end{IEEEproof}

With the expression in \eqref{eq:Qideal1}, the average harvested energy with perfect CSI at the ET is upper-bounded as
\begin{align}
\bar{Q}^{\text{ideal}}(N)&=\eta T P_s \beta \sum_{n=1}^{N_2} G_n (N,M)\leq  \eta T P_s \beta N_2 M G_1(N,1), \label{eq:QidealUB}
\end{align}
where the inequality follows from Lemma~\ref{lemma:GnUB}.
Furthermore, it is easy to see that $\bar{Q}^{\text{ideal}}(N)$ is lower-bounded as
\begin{align}
\bar{Q}^{\text{ideal}}(N)\geq \eta T P_s \beta G_1(N,M) \geq \eta T P_s \beta G_1(N,1),\label{eq:QidealLB}
\end{align}
where the last inequality follows since $G_1(N,M)$ monotonically increases with $M$. It follows from \eqref{eq:QidealUB} and \eqref{eq:QidealLB} that  $\bar{Q}^{\text{ideal}}(N)$ should have the same asymptotic scaling with $N$ as $G_1(N,1)$. Furthermore, for sufficiently large $N$, it follows from \eqref{eq:G1M1} that $G_1(N,1)$ can be expressed as
\begin{align}
G_1(N,1)=\sum_{i=1}^N \frac{1}{i} =\ln N + \gamma + \epsilon_{N},
\end{align}
where $\gamma \approx 0.5772$ is the Euler-Mascheroni constant and $\epsilon \sim \frac{1}{2N}$ which approaches $0$ as $N$ increases. Therefore, for $N\rightarrow \infty$, the following inequalities hold,
\begin{align}
\eta T P_s \beta \ln N \leq  \bar{Q}^{\text{ideal}}(N)  \leq \eta T P_s \beta N_2 M \ln N,
\end{align}
or $\bar{Q}^{\text{ideal}}(N)=\Theta(\ln N)$. This thus completes the proof of Lemma~\ref{lemma:largeNPerfect}.

\section{Proof of Lemma~\ref{lemma:largeNProposed}}\label{A:largeN}
To prove Lemma~\ref{lemma:largeNProposed}, we first derive an upper bound for the net harvested energy $\Qnet(N_1,E_1,\{E_{2n}\})$ of the two-phase training scheme for arbitrary $N_1$, $E_1$, and $\{E_{2,n}\}$. Based on \eqref{eq:Qnet}, we have
\begin{align}
\Qnet & (N_1,E_1,\{E_{2n}\}) \leq  \eta T P_s \sum_{n=1}^{N_2} R_n(N_1,E_1)-E_1N_1 \notag \\
=& \eta T P_s \beta \sum_{n=1}^{N_2} \frac{\beta E_1 G_n(N_1,M)+N_0M}{\beta E_1 + N_0}-E_1N_1 \notag \\
\leq & \eta T P_s \beta N_2 M \frac{\beta E_1 G_1(N_1,1)+N_0}{\beta E_1+N_0}-E_1N_1\notag \\
 \triangleq & \Qnet^{\text{UB}}(N_1,E_1),\notag
\end{align}
where the last inequality follows from Lemma~\ref{lemma:GnUB} in Appendix~\ref{A:largeNPerfect}. Note that the above bound is tight for $M=1$ and $N_2=1$.
As a result, the net harvested energy achieved by the optimized training scheme is upper-bounded by the optimal value of the following  problem.
 \begin{align}
\text{(P1-UB):} \ \underset{ N_1, E_1}{\max}   \quad  & \eta T P_s \beta N_2 M \frac{\beta E_1 G_1(N_1,1)+N_0}{\beta E_1+N_0}-E_1N_1 \notag \\
\text{subject to} \quad &  N_2 \leq N_1 \leq N, \notag \\
& E_1\geq 0. \notag
\end{align}
With $N_1$ fixed, (P1-UB) is a convex optimization problem with respect to $E_1$, whose optimal solution can be expressed as
\begin{align}
E_1^{\text{UB}\star}(N_1)=\left[\sqrt{\frac{\eta T P_s N_2 M N_0(G_1(N_1,1)-1)}{N_1}}-\frac{N_0}{\beta} \right]^+. \label{eq:E1SISO}
\end{align}

The corresponding optimal value of (P1-UB) as a function of $N_1$ can be expressed as
\begin{align}\label{eq:QnetUB}
\Qnet^{\text{UB}}(N_1)=
\begin{cases}
\eta T P_s N_2 M \beta, & \text{ if }  G_1(N_1,1)-1<\frac{N_1}{\Gamma N_2 M} \\
\eta T P_s N_2 M \beta \left(1 + \left( \sqrt{G_1(N_1,1)-1} - \sqrt{\frac{N_1}{\Gamma N_2 M}}\right)^2 \right), & \text{ otherwise}.
\end{cases}
\end{align}
Therefore, problem (P1-UB) reduces to finding the optimal number of training sub-bands as
\begin{equation}
\begin{aligned}
\underset{N_2\leq N_1\leq N}{\max}   \quad  & \Qnet^{\text{UB}}(N_1).\label{P:UB}
\end{aligned}
\end{equation}
It follows from  \eqref{eq:QnetUB} that if $\Gamma$ is small such that $G_1(N_1,1)-1<\frac{N_1}{\Gamma N_2 M}$, $\forall N_1$, $\Qnet^{\text{UB}}$ is independent of $N_1$ and hence problem \eqref{P:UB} has the trivial optimal value $\eta T P_s N_2 M \beta$, which is smaller than the right hand side (RHS) of \eqref{eq:QnetUBProposed}.
On the other hand, for the non-trivial scenario with moderately large $\Gamma$ values, problem \eqref{P:UB} reduces to
\begin{equation}
\begin{aligned}
\underset{N_2 \leq N_1\leq N}{\max}   \quad  &  \sqrt{G_1(N_1,1)-1} - \sqrt{\frac{N_1}{\Gamma N_2 M}}.\label{P:UB2}
\end{aligned}
\end{equation}
Problem \eqref{P:UB2} is a discrete optimization problem due to the integer constraint on $N_1$. To obtain an approximate closed-form solution to \eqref{P:UB2}, we first apply the following approximation for the harmonic sum,
\begin{align}
G_1(N_1,1)=\sum_{i=1}^{N_1}\frac{1}{i} = \ln N_1 + \gamma + \epsilon_{N_1} \approx \ln N_1 +1. \notag
\end{align}
Furthermore, by relaxing the integer constraint and with asymptotically large $N$ and moderate $N_2$ values, problem \eqref{P:UB2} can be approximated as
 \begin{equation}
\begin{aligned}
\underset{\bar N_1}{\max}   \quad  & \sqrt{\ln \bar N_1} - \sqrt{\frac{\bar N_1}{\Gamma N_2 M}}.\label{P:UB3}
\end{aligned}
\end{equation}
By setting the derivative to zero, the optimal solution $\bar N_1^\star$ to problem \eqref{P:UB3} satisfies
\begin{align}
\bar N_1^\star \ln \bar N_1^\star =\Gamma N_2 M, \text{ or } \ln \bar N_1^\star e^{\ln \bar N_1^\star}=\Gamma N_2 M. \label{eq:N1BarStar}
\end{align}
Based on the definition of the Lambert W function, $\bar N_1^\star$ in \eqref{eq:N1BarStar} can be expressed as $\bar N_1^\star=e^{W(\Gamma N_2 M)}$, and the corresponding optimal value of problem \eqref{P:UB} can be obtained as
\begin{align}
\hspace{-2ex} \Qnet^{\text{UB}^\star} \approx \eta T N_2 P_2 \beta M \Big( 1 +\Big(\sqrt{W(\Gamma N_2 M)}-\frac{1}{\sqrt{W(\Gamma N_2 M)}} \Big)^2\Big).
\end{align}

This thus completes the proof of Lemma~\ref{lemma:largeNProposed}.

\bibliographystyle{IEEEtran}
\bibliography{IEEEabrv,IEEEfull}

\begin{thebibliography}{10}
\providecommand{\url}[1]{#1}
\csname url@samestyle\endcsname
\providecommand{\newblock}{\relax}
\providecommand{\bibinfo}[2]{#2}
\providecommand{\BIBentrySTDinterwordspacing}{\spaceskip=0pt\relax}
\providecommand{\BIBentryALTinterwordstretchfactor}{4}
\providecommand{\BIBentryALTinterwordspacing}{\spaceskip=\fontdimen2\font plus
\BIBentryALTinterwordstretchfactor\fontdimen3\font minus
  \fontdimen4\font\relax}
\providecommand{\BIBforeignlanguage}[2]{{%
\expandafter\ifx\csname l@#1\endcsname\relax
\typeout{** WARNING: IEEEtran.bst: No hyphenation pattern has been}%
\typeout{** loaded for the language `#1'. Using the pattern for}%
\typeout{** the default language instead.}%
\else
\language=\csname l@#1\endcsname
\fi
#2}}
\providecommand{\BIBdecl}{\relax}
\BIBdecl

\bibitem{525}
S.~Bi, C.~K. Ho, and R.~Zhang, ``Wireless powered communication: opportunities
  and challenges,'' to appear in {\it IEEE Commun. Mag.}, available online at
  http://arxiv.org/abs/1408.2335.

\bibitem{534}
X.~Lu, P.~Wang, D.~Niyato, D.~I. Kim, and Z.~Han, ``{Wireless networks with RF
  energy harvesting: a contemporary survey},'' to appear in \emph{IEEE Commun.
  Surveys Tuts.}, available online at http://arxiv.org/abs/1406.6470.

\bibitem{478}
R.~Zhang and C.~K. Ho, ``{MIMO} broadcasting for simultaneous wireless
  information and power transfer,'' \emph{{IEEE} Trans. Wireless Commun.},
  vol.~12, no.~5, pp. 1989--2001, May 2013.

\bibitem{514}
X.~Zhou, R.~Zhang, and C.~K. Ho, ``Wireless information and power transfer:
  architecture design and rate-energy tradeoff,'' \emph{{IEEE} Trans. Commun.},
  vol.~61, no.~11, pp. 4757--4767, Nov. 2013.

\bibitem{521}
L.~Liu, R.~Zhang, and K.~C. Chua, ``Wireless information transfer with
  opportunistic energy harvesting,'' \emph{{IEEE} Trans. Wireless Commun.},
  vol.~12, no.~1, pp. 288--300, Jan. 2013.

\bibitem{515}
H.~Ju and R.~Zhang, ``Throughput maximization in wireless powered communication
  networks,'' \emph{{IEEE} Trans. Wireless Commun.}, vol.~13, no.~1, pp.
  418--428, Jan. 2014.

\bibitem{510}
S.~Lee, R.~Zhang, and K.~Huang, ``Opportunistic wireless energy harvesting in
  cognitive radio networks,'' \emph{{IEEE} Trans. Wireless Commun.}, vol.~12,
  no.~9, pp. 4788--4799, Sep. 2013.

\bibitem{509}
X.~Lu, P.~Wang, D.~Niyato, and E.~Hossain, ``Dynamic spectrum access in
  cognitive radio networks with {RF} energy harvesting,'' \emph{IEEE Wireless
  Commun.}, pp. 102--110, Jun. 2014.

\bibitem{535}
A.~A. Nasir, X.~Zhou, S.~Durrani, and R.~A. Kennedy, ``Relaying protocols for
  wireless energy harvesting and information processing,'' \emph{{IEEE} Trans.
  Wireless Commun.}, vol.~12, no.~7, pp. 3622--3636, Jul. 2013.

\bibitem{511}
Z.~Ding, S.~M. Perlaza, I.~Esnaola, and H.~V. Poor, ``Power allocation
  strategies in energy harvesting wireless cooperative networks,'' \emph{{IEEE}
  Trans. Wireless Commun.}, vol.~13, no.~2, pp. 846--860, Feb. 2014.

\bibitem{551}
Y.~Zeng and R.~Zhang, ``Full-duplex wireless-powered relay with self-energy
  recycling,'' \emph{IEEE Wireless Commun. Lett.}, vol.~4, no.~2, pp. 201--204,
  Apr. 2015.

\bibitem{541}
H.~Ju and R.~Zhang, ``User cooperation in wireless powered communication
  networks,'' in \emph{IEEE Global Communications Conference (GLOBECOM)}, Dec.
  2014.

\bibitem{374}
F.~Rusek, D.~Persson, B.~K. Lau, E.~G. Larsson, T.~L. Marzetta, O.~Edfors, and
  F.~Tufvesson, ``Scaling up {MIMO}: Opportunities and challenges with very
  large arrays,'' \emph{{IEEE} Signal Process. Mag.}, vol.~30, no.~1, pp.
  40--60, Jan. 2013.

\bibitem{497}
L.~Lu, G.~Y. Li, A.~L. Swindlehurst, A.~Ashikhin, and R.~Zhang, ``An overview
  of massive {MIMO}: benefits and challenges,'' \emph{IEEE J. Sel. Topics
  Signal Process.}, vol.~8, no.~5, pp. 742--758, Oct. 2014.

\bibitem{552}
G.~Yang, C.~K. Ho, R.~Zhang, and Y.~L. Guan, ``Throughput optimization for
  massive {MIMO} systems powered by wireless energy transfer,'' to appear in
  \emph{IEEE J. Sel. Areas Commun.}, available online at
  http://arxiv.org/abs/1403.3991.

\bibitem{504}
P.~Grover and A.~Sahai, ``Shannon meets tesla: Wireless information and power
  transfer,'' in \emph{Int. Symp. on Inf. Theory}, Jun. 2010, pp. 2363--2367.

\bibitem{549}
{FCC Rules and Regulations Part 15 Section 247, Operation within the bands
  902-928 MHz, 2400-2483.5 MHz, and 5725-5850 MHz}.

\bibitem{522}
D.~W.~K. Ng, E.~S. Lo, and R.~Schober, ``Wireless information and power
  transfer: Energy efficiency optimization in {OFDMA} systems,'' \emph{{IEEE}
  Trans. Wireless Commun.}, vol.~12, no.~12, pp. 6352--6370, Dec. 2013.

\bibitem{526}
X.~Zhou, R.~Zhang, and C.~K. Ho, ``Wireless information and power transfer in
  multiuser {OFDM} systems,'' \emph{{IEEE} Trans. Wireless Commun.}, vol.~13,
  no.~4, pp. 2282--2294, Apr. 2014.

\bibitem{527}
K.~Huang and E.~Larsson, ``Simultaneous information and power transfer for
  broadband wireless systems,'' \emph{{IEEE} Trans. Signal Process.}, vol.~61,
  no.~23, pp. 5972--5986, Dec. 2013.

\bibitem{495}
G.~Yang, C.~K. Ho, and Y.~L. Guan, ``Dynamic resource allocation for
  multiple-antenna wireless power transfer,'' \emph{{IEEE} Trans. Signal
  Process.}, vol.~62, no.~14, pp. 3565 -- 3577, Jun. 2014.

\bibitem{500}
X.~Chen, C.~Yuen, and Z.~Zhang, ``Wireless energy and information transfer
  tradeoff for limited-feedback multiantenna systems with energy beamforming,''
  \emph{{IEEE} Trans. Veh. Technol.}, vol.~63, no.~1, pp. 407--412, Jan. 2014.

\bibitem{484}
D.~J. Love, R.~W. Heath~Jr., V.~K.~N. Lau, D.~Gesbert, B.~D. Rao, and
  M.~Andrews, ``An overview of limited feedback in wireless communication
  systems,'' \emph{{IEEE} J. Sel. Areas Commun.}, vol.~26, no.~8, pp.
  1341--1365, Oct. 2008.

\bibitem{373}
T.~L. Marzetta, ``Noncooperative cellular wireless with unlimited numbers of
  base station antennas,'' \emph{{IEEE} Trans. Wireless Commun.}, vol.~9,
  no.~11, pp. 3590--3600, Nov. 2010.

\bibitem{491}
J.~Xu and R.~Zhang, ``Energy beamforming with one-bit feedback,'' \emph{{IEEE}
  Trans. Signal Process.}, vol.~62, no.~20, pp. 5370--5381, Oct. 2014.

\bibitem{528}
Y.~Zeng and R.~Zhang, ``Optimized training design for wireless energy
  transfer,'' \emph{{IEEE} Trans. Commun.}, vol.~63, no.~2, pp. 536--550, Feb.
  2015.

\bibitem{557}
------, ``Optimized training design for multi-antenna wireless energy transfer
  in frequency-selective channel,'' to appear in {IEEE Int. Conf. on Commun.}
  (ICC), Jun, 2015.

\bibitem{546}
H.~A. David and H.~N. Nagaraja, \emph{Order Statistics}, 3rd~ed.\hskip 1em plus
  0.5em minus 0.4em\relax John Wiley \& Sons, Inc., Hoboken, New Jersery, 2003.

\bibitem{76}
D.~Tse and P.~Viswanath, \emph{Fundamentals of Wireless Communication}.\hskip
  1em plus 0.5em minus 0.4em\relax Cambridge University Press, 2005.

\bibitem{556}
C.~A. Neff and J.~H. Reif, ``An efficient algorithm for the complex roots
  problem,'' \emph{J. Complexity}, vol.~12, no.~2, pp. 81--115, Jun. 1996.

\bibitem{523}
T.~S. Rappaport, \emph{Wireless Communications: Principles and Practice}.\hskip
  1em plus 0.5em minus 0.4em\relax Prentice Hall, 2002.

\bibitem{524}
I.~Gradshteyn and I.~M. Ryzhik, \emph{Table of integrals, series and products},
  7th~ed.\hskip 1em plus 0.5em minus 0.4em\relax Elsevier Academic Pres, 2007.

\end{thebibliography}

\end{document}